\renewcommand\section{\@startsection{section}{1}{\z@}%
                       {-10\p@ \@plus \p@ \@minus 0\p@}%
                       {10\p@ \@plus 2\p@ \@minus 4\p@}%
                       {\normalfont\large\bfseries\boldmath
                        \rightskip=\z@ \@plus 8em\pretolerance=10000 }}
\renewcommand\subsection{\@startsection{subsection}{2}{\z@}%
                       {-5\p@ \@plus -4\p@ \@minus 2\p@}%
                       {8\p@ \@plus 4\p@ \@minus 4\p@}%
                       {\normalfont\normalsize\bfseries\boldmath
                        \rightskip=\z@ \@plus 8em\pretolerance=10000 }}
\renewcommand\subsubsection{\@startsection{subsubsection}{3}{\z@}%
                       {-18\p@ \@plus -4\p@ \@minus -4\p@}%
                       {-0.5em \@plus -0.22em \@minus -0.1em}%
                       {\normalfont\normalsize\bfseries\boldmath}}
\renewcommand\paragraph{\@startsection{paragraph}{4}{\z@}%
                       {-12\p@ \@plus -4\p@ \@minus -4\p@}%
                       {-0.5em \@plus -0.22em \@minus -0.1em}%
                       {\normalfont\normalsize\itshape}}
\renewcommand\subparagraph[1]{\typeout{LLNCS warning: You should not use
                  \string\subparagraph\space with this class}\vskip0.5cm
You should not use \verb|\subparagraph| with this class.\vskip0.5cm}
\DeclareMathSymbol{\Gamma}{\mathalpha}{letters}{"00}
\DeclareMathSymbol{\Delta}{\mathalpha}{letters}{"01}
\DeclareMathSymbol{\Theta}{\mathalpha}{letters}{"02}
\DeclareMathSymbol{\Lambda}{\mathalpha}{letters}{"03}
\DeclareMathSymbol{\Xi}{\mathalpha}{letters}{"04}
\DeclareMathSymbol{\Pi}{\mathalpha}{letters}{"05}
\DeclareMathSymbol{\Sigma}{\mathalpha}{letters}{"06}
\DeclareMathSymbol{\Upsilon}{\mathalpha}{letters}{"07}
\DeclareMathSymbol{\Phi}{\mathalpha}{letters}{"08}
\DeclareMathSymbol{\Psi}{\mathalpha}{letters}{"09}
\DeclareMathSymbol{\Omega}{\mathalpha}{letters}{"0A}
\newcommand{\real}{\mbox{\rm I\hspace{-0.2em}R}}
\newtheorem{theorem}{Theorem}
\newtheorem{remark}{Remark}
\newtheorem{lemma}{Lemma}
\newtheorem{definition}{Definition}
\newcommand{\Real}{ \real }
\def\prob{\mathbb{P}}
\def\sy{\mathsf{y}}
\def\eff{\eta}
\def\cost{\kappa}
\def\bY{\mathbb{Y}}
\def\bX{\mathbb{X}}
\def\bZ{\mathbb{Z}}
\newenvironment{proof}{\vskip 2mm\noindent {\it Proof}:}{\hfill $\square$ \vskip 2mm}
    \def\mF{{\mathcal     F}}    \def\mM{{\mathcal     M}}
\def\mV{{\mathcal V}}
\def\mQ{{\mathcal Q}}
 \def\mL{{\mathcal L}}
  \def\VC{\text{VC}}
  \def\VCs{\text{VC}}
    \def\L{\text{L}}
\def\tr{\text{tr}}     \def\mP{{\mathcal      P}}
\def\mH{\mathcal{H}}
\def\dist{\mbox{dist}}
\newcommand{{\integer}}{\mbox{\rm Z\hspace{-0.85em}Z}\,}
\newcommand{\be}{\begin{equation}}
\newcommand{\ee}{\end{equation}}
\newcommand{\bq}{\begin{eqnarray*}}
\newcommand{\eq}{\end{eqnarray*}}
\newcommand{\bqq}{\begin{eqnarray}}
\newcommand{\eqq}{\end{eqnarray}}
\def\squareforqed{\hbox{\rlap{$\sqcap$}$\sqcup$}}
\def\qed{\ifmmode\squareforqed\else{\unskip\nobreak\hfil
\penalty50\hskip1em\null\nobreak\hfil\squareforqed
\parfillskip=0pt\finalhyphendemerits=0\endgraf}\fi}
\def\blacksquareforqed{\hbox{$\blacksquare$}}
\def\bqed{\ifmmode\blacksquareforqed\else{\unskip\nobreak\hfil
\penalty50\hskip1em\null\nobreak\hfil\blacksquareforqed
\parfillskip=0pt\finalhyphendemerits=0\endgraf}\fi}
\def\triangleforqed{\hbox{$\triangle$}}
\def\tqed{\ifmmode\triangleforqed\else{\unskip\nobreak\hfil
\penalty50\hskip1em\null\nobreak\hfil\triangleforqed
\parfillskip=0pt\finalhyphendemerits=0\endgraf}\fi}
\begin{document}

 \thispagestyle{plain}

\title{Information width}
\author{ Joel Ratsaby\\
{\small Electrical and Electronics Engineering Department}\\
{\small Ariel University Center of Samaria}\\
{\small ISRAEL}\\
 {\small\tt ratsaby@ariel.ac.il}}


\maketitle

\begin{abstract}

Kolmogorov argued that the concept of information  exists also
in problems with no underlying stochastic  model (as Shannon's information representation)
for instance,  the information contained in an algorithm or in the genome.
He introduced a combinatorial notion of entropy 
and information $I(x:\sy)$  conveyed
by a binary string  $x$
about the unknown value of a variable $\sy$.
The current paper poses the following questions: what is the relationship between
the information conveyed by  $x$ about $\sy$ to 
the  description complexity of  $x$ ?
is there a notion of cost of information   ?  
are there limits on how efficient  $x$ conveys information ?
 To answer these questions Kolmogorov's definition is extended
 and a new concept 
termed {\em information width} which is similar to $n$-widths in approximation theory is introduced.
Information of any input  source, e.g., sample-based, general side-information or a hybrid of both
can be evaluated by a single common formula.
An application to the space of binary functions is considered.

\end{abstract}

{\bf Keywords:}
Binary functions, Combinatorics, $n$-widths,  VC-dimension

\section{Introduction}
\label{intro}

Kolmogorov \cite{Kolmogorov65} sought for a measure of information of  `finite objects'. 
He considered three approaches, the so-called {combinatorial},
{probabilistic} and {algorithmic}.
The probabilistic approach corresponds to the well-established definition of the Shannon entropy
 which applies to  stochastic settings where an `object' is represented by a random variable.
In this setting,  the entropy of an object and the  information conveyed by
one object about another  are well defined.
Here it is necessary to view an object (or a finite binary string)
as a realization of a stochastic process. While this has often been used, for instance, to measure
the information of  English texts \cite{CoverKing1978,Kontoyiannis97} by assuming some finite-order Markov process, it is not obvious
that such modeling of finite objects provides a natural and a universal representation of information
as  Kolmogorov states in \cite{Kolmogorov65}:
{\em What real meaning is there, for example, in asking
how much information is contained in (the book) "War and Peace" ? Is it reasonable
to ... postulate some probability distribution
 for this set ?
 Or, on the other hand, must we assume that the individual scenes in this book form a random sequence with stochastic relations that damp out quite rapidly over a distance of several pages ?
 }
These questions  led
 Kolmogorov to  introduce an alternate  non-probabilistic and algorithmic notion  of the  information contained in a finite
binary string. He defined  it as the length of the minimal-size program that can compute the string.
This has been
later developed into 
the  so-called 
{\em Kolmogorov Complexity} field   \citep{LiVitanyi97}.

In the combinatorial approach, 
Kolmogorov investigated another non stochastic measure of information 
for an object $y$. 
Here $y$ is taken to be any
element in 
a finite space $\bY$ of objects.
In \cite{Kolmogorov65} he defines the `entropy' of  
 $\bY$  as $H(\bY) = \log |\bY|$ where $|\bY|$ denotes
 the cardinality of $\bY$ and
all logarithms henceforth are
taken with respect to $2$.
As he writes, if the value  of 
$\bY$ is known to be $\bY=\{y\}$ then this much entropy 
is `eliminated' by providing $\log |\bY|$ bits of `information'.

Let $R=\bX \times \bY$ be a general finite domain
and consider a set 
\be
\label{A}
A\subseteq R
\ee
 that consists of all `allowed' values of pairs $(x,y)\in R$.
The entropy of $\bY$ is defined as 
\[
H(\bY)=\log|\Pi_\bY(A)|
\]
 where
\(
\Pi_\bY(A) \equiv  \{y\in \bY: (x,y)\in A \text{ for some } x\in \bX\}
\)
denotes the projection of  $A$ on $\bY$.
Consider the restriction of $A$ on $\bY$ based on $x$ which is defined as
\be
\label{Ax}
Y_x =\{y\in \bY: (x,y) \in A\}, \; x\in\Pi_\bX(A)
\ee
  then
 the conditional combinatorial  entropy of $\bY$ given $x$ is defined as
 \be
  \label{Hxy}
 H(\bY|x) = \log |Y_x|.
 \ee
Kolmogorov  defines  the  information conveyed by  $x$ about  $\bY$ by the quantity 
\be
\label{KIxy}
I(x:\bY) = H(\bY) - H(\bY|x).
\ee 
Alternatively, we
may view $I(x:\bY)$ as
the information
that a set $Y_x$ conveys about another set $\bY$ satisfying $Y_x \subseteq \bY$.
In this case we  let the domain be $R=\Pi_\bY(A)\times \Pi_\bY(A)$,
 $A_x\subseteq R$ 
is the set of permissible pairs
 $A_x = \{(y, y'): y\in \Pi_\bY(A), y'\in Y_x\}$
and  the information is defined as
\be
\label{btts}
I(Y_x: \bY) = \log|\Pi_\bY(A)|^2 - \log(|Y_x| |\Pi_\bY(A)|).
\ee
We will refer to this representation as Kolmogorov's  information between  sets.
Clearly, $I(Y_x:\bY) = I(x:\bY)$.

In many applications,  knowing an input
 $x$ 
 only conveys partial information
  about an unknown value $y\in\bY$.
For instance, in  problems 
which involve  the analysis of algorithms on discrete classes of structures, 
such as sets of binary vectors or functions on a finite domain,
 an algorithmic search is made for some optimal element in this set
based only on partial information.
One such paradigm is the area of  statistical pattern recognition \cite{Vapnik1998,AB99} 
 where an unknown target, i.e., a pattern classifier, is seeked 
 based on the information contained in a finite sample and some side-information.
This information is implicit in the particular set of classifiers that form  the possible hypotheses.

For example, let  $n$ be a positive integer
and   consider the domain $[n]=\{1, \ldots, n\}$. Let $F=\{0, 1\}^{[n]}$ be the set
   of all binary functions $f:[n]\rightarrow\{0, 1\}$.
  The power set $\mP(F)$ represents  the family of all  sets
   $G\subseteq F$. Repeating this, we have  $\mP(\mP(F))$  as the collection of all properties of sets $G$, i.e.,
   a {\em property} is a set whose elements are  subsets $G$ of $F$.
  We  denote by $\mM$ a property of a set $G$ and write $G\models\mM$.
      Suppose that we seek to know  an unknown target function $t\in F$.
Any  partial  information about   $t$ which may be expressed by $t\in G\models \mM$
can effectively reduce
the search space. 
It has been a long-standing problem  to try to quantify 
 the value of general side-information for learning (see \cite{RatsabyMaiorov98} and references therein).

We assert  that Kolmogorov's combinatorial framework may serve as a basis.
We let 
      $x$ index possible
  properties $\mM$  of  subsets $G\subseteq F$
  and the  object $y$ represent the unknown  target $t$ 
  which may be any element of $F$.
  Side information is then represented by knowing certain properties
  of sets that contain the  target.
The input $x$
conveys that $t$ is in some subset $G$  that has a certain property $\mM_x$.   

In principle, Kolmogorov's quantity
 $I(x:\bY)$  should  serve as the value of information in $x$ about the unknown value of $y$.
However,  its current form (\ref{KIxy})   is not general enough
 since it requires that the target $y$ be restricted to
a fixed set $Y_x$ on knowledge of  $x$.
To see this,
suppose $t$ is in a set that satisfies property $\mM_x$.
Consider the collection $\{G_z\}_{z\in Z_x}$ of all subsets $G_z\subseteq F$ that have this property.
Clearly, $t\in \bigcup_{z\in Z_x}G_z$ hence
 we may first consider  $Y_x = \bigcup_{z\in Z_x}G_z$ but
 some useful information  implicit in this collection is ignored as we now show:
 consider two properties $\mM_0$ and $\mM_1$ 
 with corresponding
  index sets $Z_{x_0}$ and $Z_{x_1}$ such that
  $\bigcup_{z\in Z_{x_0}}G_z = \bigcup_{z\in Z_{x_1}}G_z\equiv F'\subseteq F$.
  Suppose that most
 of the sets $G_z$, $z\in Z_{x_0}$ are small while
 the sets
  $G_z$, $z\in Z_{x_1}$ are large. Clearly, property $\mM_0$
  is more informative than $\mM_1$ since
  starting with  knowledge
    that  $t$ is in  a set that satisfies it should take (on average) 
    less additional
  information (once the particular set $G$  becomes known) in order to completely specify $t$.
    If, as above, we let $A_{x_0}= \bigcup_{z\in Z_{x_0}}G_z$
 and $A_{x_1}= \bigcup_{z\in Z_{x_1}}G_z$ then
we have  $I(x_0:\bY) = I(x_1:\bY)$ which wrongly implies
  that both properties are equally informative.
Knowing  $\mM_0$ provides implicit information associated with 
the collection of possible sets $G_z$,  $z\in Z_{x_0}$.
  This implicit structural information  cannot be represented
  in  (\ref{KIxy}). 

\section{Overview}

In \cite{Ratsaby2006a} we began to consider an extension of Kolmogorov's combinatorial information
that can be  to applied for more  general settings.
The current paper further builds upon this 
and continues to
 explore the `objectification' of information, viewing it 
 as a `static' relationship between  sets of objects
in contrast to the standard Shannon representation.
As it is based on basic set theoretic principles no assumption is necessary
concerning  the underlying space of objects other than its finiteness.
It 
is thus  more fundamental than the standard  probability-based representation
 used in information theory. 
It is also more general than
  Bayesian approaches, for instance in statistical pattern recognition,
    which assume that  a target $y$ 
is randomly drawn from $\bY$ according to a prior probability distribution.

    The main two contributions of the paper are,
    first, the introduction of a set-theoretic framework 
    of information and its efficiency, and
     secondly
  the application of  this framework 
to classes of binary functions.
Specifically, in Section \ref{opt} 
we define 
a quantity called the information width (Definition \ref{optd}) which measures  
the information conveyed about  an unknown target $y$
by a maximally-informative input  of a fixed description
 complexity $l$ (this notion is defined in Definition \ref{lx1}).
The first result,  Theorem \ref{optt}, computes this width and it is consequently   used
 as a  reference point
 for a comparison of the information value of different inputs.
This is done via 
the measures of cost and efficiency  of information   defined in Definitions \ref{eff} and \ref{eff1}.
The width serves as a universal reference against which any type
of information input
may be computed and compared, for instance, the information 
of a finite data sample used in a problem of learning
can be compared to 
other kinds of side-information.

 In Section \ref{sec4}
we apply the framework  to the space of binary functions on a finite domain.
We consider  information which is conveyed via properties of  classes of binary functions,
specifically, those that relate to the complexity of learning such functions.
The properties  are stated in terms of  combinatorial quantities such as
 the  Vapnik-Chervonenkis (VC) dimension.
Our interest in investigating the information conveyed by such properties stems from
the  large body of work  on  learning binary function
 classes (see for instance \cite{Mansour94,145265,28427}).
This is part of 
the area called  statistical learning theory that  deals with computing the
complexity 
of learning over   hypotheses classes, e.g., neural networks,  using 
 various   algorithms  each
 with its particular type of  side information (which is sometimes
 referred to as the `inductive bias', see \cite{Mitchell1997}).

For instance  it is known \cite{AB99} that learning an unknown target function
in a class of VC-dimension (defined in Definition \ref{VC}) which is no greater than  $d$ 
requires a training sample of size  linear in $d$.
The knowledge  that
  the target is in a class which has this property  conveys
the useful information 
  that there exist an algorithm
which  learns any target in this class
  to within   arbitrarily low  error
  based on a finite samples
   (in general this is impossible
if  the VC-dimension is infinite).
 But  how valuable
is this information,  can it be quantified and
 compared to other types of side information ?
 
The theory  developed here
 answers this and treats the problem of computing the
  value of  information
 in a uniform manner for any source.
The generality of this approach
stands on  its set-theoretic basis. Here  information, or entropy, is defined 
in terms of the number of bits that it takes to index objects in general sets.
This is conveniently  applicable to  settings such as those in learning theory  where
the underlying structures consist of classes of objects or
 inference models (hypotheses) such as   binary functions (classifiers),
  decision trees, Boolean formulae, neural networks.
Another area (unrelated to learning  theory) which can be applicable 
is  computational biology.
Here a sequence of the nucleotides or
amino acids  make up DNA, RNA or protein
molecules and one is interested in the amount of functional information
needed to specify sequences with internal
order or structure.
Functional information is not a property of
any one molecule, but of the ensemble of all
possible sequences, ranked by activity.
As an example,  \cite{SzJack2003} considers a pile of DNA, RNA or protein
molecules of all possible sequences, sorted
by activity with the most active at the top.
More information
is required to specify molecules that
carry out difficult tasks, such as high-affinity
binding or the rapid catalysis of chemical
reactions with high energy barriers, than is
needed to specify weak binders or slow
catalysts. But, as stated in  \cite{SzJack2003}, precisely how much more
functional information is required to
specify a given increase in activity is
unknown. 
Our model may be applicable here if we let all sequences of a fixed  
level $\alpha$ of activity 
be in the same class that satisfies  a property $\mM_\alpha$.
Its  description complexity (introduced later) may represent
 the amount of functional information.

In Section \ref{sec4} we consider an application of this model
and state  Theorems \ref{p1} -- \ref{cor2}
which  estimate the information value,
 the  cost and the description
complexity 
associated with
 different properties. 
  This allows to compute  their information efficiency and  
  compare their relative values.
 For instance, Theorem \ref{cor2} considers a hybrid property which
 consists of two sources of information, a sample of size $m$
  and side-information about the VC-dimension $d$ of the class that contains the target.
 By computing the efficiency with respect to
  $m$ and $d$ we determine  how  effective is each of these sources.
   In  Section \ref{sec4a} we compare the information efficiency 
of several additional properties of this kind.

\section{Combinatorial formulation of information}
 \label{sec3}

 In this section we  extend the information measure (\ref{KIxy})
 to one that  applies to a more general setting (as discussed in Section \ref{intro})
 where  knowledge of  $x$  may still leave some vagueness about the possible  value of $y$.
 As in \cite{Kolmogorov65} we seek a non-stochastic representation 
    of the information conveyed by  $x$ about $y$. 
  
  Henceforth let $\bY$ be a general finite domain, let $\bZ = \mP(\bY)$ and $\bX = \mP(\bZ)$ where as before
  for a set $E$ we denote by  $\mP(E)$ its power set.
  Here $\bZ$ represents the set of  indices $z$ of all possible sets $Y_z\subseteq\bY$ and 
  $\bX$ is the set of  indices $x$ of all possible collections, i.e., subsets
   $Z_x\subseteq \bZ$ of indices of  sets $Y_z$, $z\in \bZ$. 
   We say that a set $Y_z\subseteq \bY$ has a property $\mM_x$ 
   if $z\in Z_x$ where $Z_x$ is the uniquely corresponding collection of $\mM_x$.
   
  The previous representation based on (\ref{A})
  is subsumed by  this representation since  instead of  $\bX$ we have $\bZ$
  and  the sets $Y_x$, $x\in \Pi_\bX(A)$  defined in (\ref{Ax}) 
  are now represented by the sets $Y_z$,  $z\in Z_x$.
  Since  $\bX$ indexes  all possible properties of sets in $\bY$
  then for any $A$ as in (\ref{A}) there exists an $x\in \bX$ in the new representation such that
  $\Pi_\bX(A)$ in Kolmogorov's representation is equivalent to $Z_x$ in the new representation.
  Therefore what was previously represented by the  sets $\{Y_x: x\in \Pi_\bX(A)\}$ 
  is  now the collection of sets
  $\{Y_z: z\in Z_x\}$.
  In this new representation   a given input $x$
  can point to multiple subsets $Y_z$ of $\bY$, $z\in Z_x$, and hence apply for
  the more general settings discussed in the previous section.

We  will view  the information conveyed by $x$ about an unknown
object $y$ through two perspectives. The first is  held by the side that
provides the information  and the second by the side which acquires it.
From the side of the provider, we denote by the subset $\sy\subset\bY$ 
a  set of target values $y$, for instance,  solutions to a problem any one of which the provider
may wish to inform the acquirer. 
In general the provider provides  partial
information about $y$ via
an object $x\in\bX$ which is used as a  means of representing this information
and   as input 
to the acquirer.

From the acquirer's side, initially (before seeing $x$) 
the set of possible targets is the whole target-domain $\bY$ since
he  does not `know' the subset $\sy$.
After seeing the input  $x$ he then has a collection of sets $Y_z$, $z\in Z_x$,
one of which is ensured (by the provider) to intersect the subset $\sy$.
In this case we say that $x$ is  informative about  $\sy$
(Definition \ref{informative}).
 Kolmogorov's representation fits  the  acquirer's perspective
where   the unknown subset $\sy$ is just the whole  target domain $\bY$ (known by default)
 and therefore $x$ is the only variable in the information formula  of (\ref{KIxy}).
 In all subsequent   definitions that involve $\sy$ 
we may  switch between the two perspectives
simply by replacing  $\sy$ with $\bY$.

\begin{definition}
\label{informative}
Let  $\sy\subseteq\bY$ be fixed.
An input object  $x\in\bX$ is called {\em informative} for $\sy$, denoted $x\vdash \sy$,
if there exists a $z\in Z_x$ with a corresponding  set $Y_z$
 such that
 $\sy\bigcap Y_z\neq \emptyset$.
\end{definition}
The following 
is our definition of the combinatorial  value of information.
\begin{definition}
\label{I}
Let $\sy\subseteq \bY$ and consider any $x\in\bX$  such that $x\vdash \sy$.
Define by
\[
I(z: \sy) = \log (|\sy\bigcup Y_z|^2) - \log(|\sy|| Y_z|).
\]
%
Then the information conveyed by $x$ about the unknown value $y\in\sy$ is defined as
\[
I(x:\sy) \equiv  \frac{1}{|Z_x|} \sum_{z\in Z_x} I(z:\sy).
\]
\end{definition}
\begin{remark}
For a fixed $x$,  
the information value $I(x:\sy)$ is in general dependent on $\sy$ 
since different   $\sy$ for which $x$ is informative ($x\vdash\sy$)
may have different values of $I(z: \sy)$, $z\in Z_x$.
The information value is non-negative real measured in bits.
\end{remark}
Henceforth, it will be convenient to assume that
 $I(x:\sy)=0$ whenever
   $x$ is not informative for $\sy$.
\begin{remark}
We will refer to $I(x:\sy)$ as the provider's (or provided) information 
about the unknown target $y$ given $x$.
The acquirer's (or acquired)
 information is  defined based on
 the special case where $\sy = \bY$. Here  we have 
\(
|\sy\bigcup Y_z| = |\bY\bigcup Y_z| = |\bY|
\)
and the information value becomes
\bqq
I(x:\bY)
&=& \frac{1}{|Z_x|}\sum_{z\in Z_x}\left[2\log|\bY| - \log|\bY| - \log|Y_z|  \right]\nonumber\\
\label{newIxy}
&=& \log|\bY| - \frac{1}{|Z_x|}\sum_{z\in Z_x} \log|Y_z|.
\eqq
\end{remark}
Definition \ref{I} is consistent with  (\ref{KIxy}) in that
 the representation of uncertainty is done
as in \cite{Kolmogorov65}
in a set-theoretic approach since
all expressions in (\ref{newIxy}) involve set-quantities such as cardinalities and restrictions of sets.
The expression of (\ref{KIxy}) is a special case of (\ref{newIxy})
 with $Z_x$  being a singleton set
 and $\sy = \bY$.
In defining $I(z:\sy)$
we have implicitly extended Kolmogorov's
 information 
$I(Y_x:\bY)$ between   sets $Y_x$ and  $\bY$
that satisfy $Y_x \subseteq \bY$ (see (\ref{btts}))
into the more general
definition where one of the two sets is not necessarily contained in the other and
 neither one equals
 the whole space $\bY$, i.e., $I(z:\sy) \equiv I(Y_z:\sy)$
is the  information between the sets $Y_z$ and $\sy$
where  $Y_z$ is not necessarily contained in $\sy$.
Here we take  the underlying two-dimensional domain 
 as $R=(\sy\bigcup Y_z) \times (\sy\bigcup Y_z)$
and the set  of permissible pairs as
\[
A_{z,\sy} = \{(y, y'): y\in \sy, y'\in  Y_z\} \subset R.
\]
We may view the relationship between the provider and acquirer as a transformation 
between sets
\[
\sy\rightarrow \{Y_z: z\in Z_x\} {\rightarrow} \bY
\]
where the provider, knowing the set $\sy$, chooses some $x$ with which
he represents the unknown value of $y$ and for him, the amount
of information remaining about $y$ as conveyed by $x$
is $I(x:\sy)$ bits.
The acquirer, starting from knowing only  $\bY$,
 uses $x$, or equivalently the corresponding collection  of sets $\{Y_z: z\in Z_x\}$, 
as an intermediate `medium' to acquire $I(x:\bY)$ bits of information
about the unknown value of $y$.
Note that, in general, 
the provider's information may be smaller, equal
or larger than the acquired information. 
For instance, fix an $x$, then directly from Definition \ref{I}
 for any $z\in Z_x$ we can compare 
$I(z:\sy)$ versus $I(z:\bY)$ and see that if $|\sy\bigcup Y_z|$ is closer to $|\bY|$ (or $|\sy|$)
than to $|\sy|$ (or $|\bY|$) then
$I(z:\sy) > I(z:\bY)$ (or $I(z:\bY) > I(z:\sy)$) respectively.
Thus taking the average over all $z\in Z_x$ it is possible in general to have $I(x:\sy)$ 
smaller, larger or equal to  $I(x:\bY)$.

In this paper 
we will primarily use  the acquirer's perspective 
and will thus  refer to the sum in (\ref{newIxy}) as the conditional combinatorial entropy which is defined next.
\begin{definition}
Let
\be
\label{condEnty}
H(\bY|x) \equiv  \frac{1}{|Z_x|}\sum_{z\in Z_x} \log|Y_z|
\ee
be the conditional entropy of $\bY$ given $x$.
\end{definition}
It will be convenient to express the conditional entropy as
\[
H(\bY|x) = \sum_{k\geq 2}  \omega_{x}(k) \log k
\]
with
\be
\label{omegax}
\omega_{x}(k)=\frac{|\{z\in Z_x: |Y_z|=k\}|}{|Z_x|}.
\ee
We will refer to this quantity $\omega_{x}(k)$
as the conditional {\em density} function of $k$. 
The factor of $\log k$ comes from $\log |Y_z|$ which from (\ref{Hxy})
is the combinatorial
 conditional-entropy $H(\bY|z)$.



\section{Description complexity}

We have so far defined the notion of  information $I(x:\sy)$ about the unknown value $y$
 conveyed by  $x$.
 Let us now define the description complexity of $x$.

%
%
\begin{definition}
\label{lx}
The {\em description complexity}  of $x$, denoted $\ell(x)$,
is defined as
\be
\label{lx1}
\ell(x) \equiv \log \frac{|\bZ|}{|Z_x|}.
\ee
\end{definition}
\begin{remark}
The description complexity $\ell(x)$ is a positive real number  measured in bits.
It takes a fractional value  if  the cardinality of $Z_x$ is greater than half
that of $\bZ$.
\end{remark}

Definition \ref{lx} is motivated from the following:
 from Section \ref{sec3}, the input  $x$ conveys
a certain property common to every  set $Y_z$,  $z\in Z_x\subseteq \bZ$,
 such that the unknown value $y$
is an element of at least one such set $Y_z$.
Without the knowledge of $x$
these indices $z$  are only known to be elements of $\bZ$ in which
case it takes 
 $\log|\bZ|$ bits to describe any $z$, or equivalently, any $Y_z$.
 If  $x$ is given
  then the length of the binary  string  that describes a $z$ in $Z_x$
 is only $\log|Z_x|$. 
 The set $Z_x$ can therefore be described by a string of length  $\log|\bZ| - \log|Z_x|$
 which is precisely the right side of (\ref{lx1}). Alternatively, $\ell(x)$
 is the information $I(x: \bZ)$ gained about the unknown value $z$ 
 given $x$ (since $x$ points to a single set $Z_x$  then this  information
 follows directly from Kolmogorov's formula (\ref{KIxy})).
  
As $|Z_x|$ decreases 
 there are fewer possible  sets $Y_z$ that satisfy the property described by $x$
and the description complexity $\ell(x)$ increases.
In this case, $x$
conveys a more `special'  property of the possible sets $Y_z$ and  the `price'
of describing such a property increases.
The following is a useful result.
\begin{lemma}
Denote by $Z^c_x = \bZ\setminus Z_x$ the complement of the set $Z_x$ and let $x^c$ denote the input
 corresponding to $Z^c_x$. Then  
\[
\ell(x^c) = -\log (1- 2^{-\ell(x)}).
\]
\end{lemma}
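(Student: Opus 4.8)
The plan is to reduce the statement to a one-line identity about cardinalities, since the description complexity is defined purely in terms of $|\bZ|$ and $|Z_x|$. First I would introduce the abbreviations $N = |\bZ|$ and $k = |Z_x|$. With this notation, Definition \ref{lx} reads $\ell(x) = \log(N/k)$, and exponentiating gives the key relation $2^{-\ell(x)} = k/N$, which is the bridge between the complexity of $x$ and the underlying set sizes.

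Next I would handle the complement. Because $Z^c_x = \bZ \setminus Z_x$ is taken inside $\bZ$, its cardinality is simply $|Z^c_x| = N - k$. Applying Definition \ref{lx} to the input $x^c$, whose corresponding index set is by hypothesis $Z^c_x$, yields
\[
\ell(x^c) = \log\frac{|\bZ|}{|Z^c_x|} = \log\frac{N}{N-k}.
\]
This is the quantity I must match against the right-hand side of the claim.

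Finally I would evaluate the right-hand side directly using the bridge relation. Substituting $2^{-\ell(x)} = k/N$ gives
\[
1 - 2^{-\ell(x)} = 1 - \frac{k}{N} = \frac{N-k}{N},
\]
whence $-\log\bigl(1 - 2^{-\ell(x)}\bigr) = \log\frac{N}{N-k}$, which coincides exactly with the expression for $\ell(x^c)$ computed above. This completes the identification and hence the proof.

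There is no genuine obstacle here: the argument is an entirely elementary manipulation of logarithms and set cardinalities. The only subtlety worth a remark is well-definedness, namely that $\ell(x^c)$ requires $Z^c_x \neq \emptyset$, i.e. $Z_x \neq \bZ$, equivalently $\ell(x) > 0$. In that regime one has $0 < 2^{-\ell(x)} < 1$, so the logarithm on the right-hand side is applied to a strictly positive quantity and the stated equality is meaningful.
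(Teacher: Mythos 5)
Your proof is correct and follows essentially the same route as the paper's: both reduce the claim to the elementary identity that with $p = |Z_x|/|\bZ|$ one has $\ell(x^c) = -\log(1-p)$ and $2^{-\ell(x)} = p$, so the two sides coincide. Your added remark on well-definedness (requiring $Z_x \neq \bZ$, i.e.\ $\ell(x) > 0$) is a small but sensible refinement the paper leaves implicit.
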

\begin{proof}
 Denote by $p=|Z_x|/|\bZ|$. Then by definition of the description complexity we have 
 \(
 \ell(x^c) = -\log (1-p).
 \)
 Clearly, 
 \[
 2^{-\log \frac{1}{1-p}} =1 - 2^{-\log\frac{1}{p}}
 \]
 from which the result follows. 
\end{proof} 
\begin{remark}
\label{p1mp}
Since
clearly
 the proportion of elements $z\in\bZ$ which are in $Z_x$ plus the 
 proportion of those in $Z^c_x$ is fixed and equals $1$ then
 $2^{-\ell(x)} + 2^{-\ell(x^c)} = 1$.
If the description complexity $\ell(x)$ and $\ell(x^c)$ change
(for instance with respect to an increase in $|\bZ|$)
 then they change in opposite directions.
However, the cardinalities of the corresponding sets $Z_x$ and $Z^c_{x}$ may both increase (\ref{lx1})
  for instance if
  $|\bZ|$ grows at a rate 
 faster than the rate of change of either $\ell(x)$ or $\ell(x^c)$.
\end{remark}

A  question to raise at this point is whether the following trivial relationship  between $\ell(x)$ and the entropy $H(\bY|x)$ holds, 
\be
\label{test}
\ell(x) + H(\bY|x) \stackrel{?}{=} H(\bY).
\ee
This is equivalent to asking if
\be
\label{test1}
\ell(x)  \stackrel{?}{=} I(x:\bY)
\ee
or in words, does the price of describing an input $x$ equals the information gained by knowing it  ?

As we show next, the answer depends on
 certain characteristics of the set $Z_x$. When
 (\ref{KIxy}) does not apply but (\ref{newIxy}) does,  then in general,
  the relation does  not hold.

\section{Scenario examples}

In all the following scenarios we take the acquirer's perspective, i.e., 
with no input given the unknown $y$ is only known to be in $\bY$.
As the first scenario, we start with the simplest uniform setting which is defined as follows:\\
\mbox{}\\\noindent {\em Scenario S1}: As in (\ref{Ax})-(\ref{KIxy}),  an input $x$ amounts to a single set $Y_x$.
The set $Z_x$ is a singleton $\{Y_x\}$ so  $|Z_x|= 1$ and instead of  
$\bZ$ we have $\Pi_\bX(A)$.
We impose the following conditions:
for all $x, x' \in \bX$, 
 $Y_x \bigcap Y_{x'} = \emptyset$,  $|Y_x| = |Y_{x'}|$.
With  $\bY = \bigcup_{x\in\bX}Y_x$ then
it follows that for any $x$, $|Y_x| = \frac{|\bY|}{|\bX|}$.
From (\ref{lx}) it follows that the description complexity of any $x$ is 
\[
\ell(x) = \log \frac{|\bX|}{1} = \log |\bX|
\]
and  the entropy 
\[
H(\bY|x) = \log|Y_x| =\log \left(\frac{|\bY|}{|\bX|}\right).
\]
We therefore have
\[
\ell(x) + H(\bY|x) = \log |\bX| + \log |\bY| - \log |\bX| = \log|\bY|.
\]
Since the right side equals $H(\bY)$ then (\ref{test}) holds. 
Next consider another scenario:\\
\mbox{}\\\noindent {\em Scenario S2}: An input $x$  gives a single set $Y_x$
but now for any two distinct $x, x'$, we only force the condition
that $|Y_x| = |Y_{x'}|$, i.e., the intersection $Y_x\bigcap Y_{x'}$ may be non-empty.
The  description complexity $\ell(x)$ is the same as in the previous scenario
and  for any $x, x'\in\bX$
the entropy is the same $H(\bY|x) = H(\bY|x')$
with a value of $\log \left(\frac{\alpha|\bY|}{|\bX|}\right)$, for some $\alpha \geq 1$.
So 
\[
\ell(x) + H(\bY|x) = \log |\bX| + \log\left(\frac{\alpha|\bY|}{|\bX|}\right) =
 \log\left(\alpha|\bY|\right) \geq \log|\bY|.
\]
Hence the left side of (\ref{test}) is greater than or equal to the right side.
By (\ref{test1}), this means that the `price', i.e., the description complexity
 per bit of information may be  larger than $1$. 

Let us introduce at  this point  the following combinatorial quantity:
\begin{definition}
\label{eff}
The {\em cost} of  information $I(x:\sy)$, denoted $\cost_\sy(x)$, is defined as
\[
\cost_\sy(x) = \frac{\ell(x)}{I(x:\sy)}
\]
and  represents the number of description bits of $x$  per bit
 of information  about the unknown value of $y$ as conveyed by $x$.
\end{definition}
Thus, letting $\sy=\bY$ and considering  the two previous  scenarios where
Kolmogorov's definition (\ref{KIxy}) applies then
 the cost of information equals $1$ or at least $1$, respectively.
As the next scenario, let us consider the following:\\
\mbox{}\\\noindent {\em Scenario S3}: We follow the setting of Definition \ref{I} where
an input $x$  
means that the unknown value of $y$ is contained in at least one set $Y_z$, $z\in Z_x$
hence  $|Z_x|\geq 1$.
Suppose that 
$\frac{|\bY|}{|\bX|} \equiv a$ for some integer $a\geq 1$ and
assume that
 for all  $x\in\bX$, $H(\bY|x) = \log(a)$.
(The sets $Y_z$, $z\in Z_x$  may still differ in size and  overlap).
Thus 
we have
\be
\label{inequ5}
\ell(x) + H(\bY|x) = \log \frac{|\bZ|}{|Z_x|} + \log\frac{|\bY|}{|\bX|}.
\ee
Suppose that 
$\frac{|\bZ|}{|\bX|} \equiv b$ for some integer $b\geq 1$,
$Z_{x} \bigcap Z_{x'}=\emptyset$ and $|Z_x| = |Z_{x'}|$ for any $x, x'\in \bX$.
Since
$\bZ = \bigcup_{x\in\bX} Z_x$ then
$|Z_x|=b$  for all $x\in\bX$.
The right side of (\ref{inequ5})
equals $\log|\bY|$ and (\ref{test}) is satisfied.
If for some $x, x'$ we have  $|Z_x|<b$ and $|Z_{x'}|>b$ (with entropies both still at $\log(a)$) then 
the left side of (\ref{test}) is greater than or less than $H(\bY)$, respectively. Hence
it is possible in this scenario for the cost $\kappa_\sy(x)$ to be  greater or less than $1$.
 To understand why for some inputs $x$
the cost may be strictly smaller than $1$ observe that
under the current scenario the actual set $Y_z$ which contains 
the unknown $y$ remains unknown even after
producing the description $x$.
Thus in this case the left  side of (\ref{test}) represents 
the total description complexity of the unknown value of $y$
 (on average over all possible sets $Y_z$)
given that the {\em only} fact known about $Y_z$
is that its index $z$ is an element of $Z_x$.
In contrast, scenario {\em S1} has  the total description complexity of the unknown $y$
on the left  side of (\ref{test}) which also
 includes the description of the specific  $Y_x$ that contains $y$ (hence  it may be longer).
Scenario {\em S3} is an example, 
as mentioned in Section \ref{intro}, 
of  knowing a property which still leaves the acquirer
with several  sets that   contain the unknown $y$.

In Section \ref{sec4} we will consider several specific properties of this kind.
Let us now continue and introduce  additional concepts as part of the framework.

\section{Information width and efficiency}
\label{opt}

With the definitions of Section \ref{sec3} in place
we now have  a quantitative measure  of the information (and cost)
 conveyed by
 an input $x$ about an unknown value $y$. 
 This  $y$ is contained in some
 set that satisfies a certain property and the set itself
 may remain unknown.
 In   subsequent sections we  consider several  examples of 
 inputs $x$ for which  these measures are computed and compared.
 Amongst the  different ways of conveying   information about an unknown value $y$ it is natural
to ask at this point
if there exists
a notion of maximal information.
This is formalized next by the following definition  which resembles
 $n$-widths used in functional approximation theory \cite{Pinkus85}.
\begin{definition}
\label{optd}
Let
\be
\label{IAB}
I_p^*(l)\equiv\max_{
         \substack{x\in\bX \\ \ell(x)=l}
         }
 \;\min_{
 	 \substack{ \sy \subseteq\bY\\ x\vdash \sy  } 
 	 } I(x:\sy)
\ee
be the {\em $l^{th}$-information-width}.
\end{definition}
\begin{remark}
The above definition is stated from the provider's point of view. He 
is free to choose a fixed `medium', i.e., a structure $Z_x$ of sets (but  limited
in its description complexity to $l$)
in order to provide information at some later time about any
 set  $\sy\subseteq \bY$ of objects to the acquirer. For that he
 considers all possible inputs $x$ of description complexity  $l$
 and measures the information it will provide for 
 the hardest target-subset $\sy$.
 We refer to the above as the provider's information width.

If we set $\sy=\bY$ then we obtain the acquirer's width of information, denoted as 
\[
I_a^*(l) \equiv I^*(l)
\]
 which takes a simpler form of
 \[
 I^*(l) = 
 \max_{
         \substack{x\in\bX \\ \ell(x)=l}
         }
 	  I(x:\bY).
  \]
  \end{remark}

The next result computes the value of  $I^*(l)$.

\begin{theorem}
\label{optt}
Denote by $\mathbb{N}$ the positive integers.
Let $1\leq l\leq \log|\bZ|$ and define
\[
r(l) \equiv \min\left\{ a\in\mathbb{N}: \sum_{i=1}^a {|\bY| \choose i} \geq |\bZ|2^{-l}\right\}.
\]
Then we have
\be
\label{istar}
I^*(l) = \log |\bY| - 
\frac{2^l}{|\bZ|} \left(
\sum_{k=2}^{r(l)-1}{|\bY|\choose k}\log k + \left(|\bZ|2^{-l}  - 
\sum_{i=1}^{r(l)-1}{|\bY|\choose i}\right)\log r(l)\right).
\ee
\end{theorem}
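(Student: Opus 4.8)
The plan is to turn the outer maximization into a clean combinatorial minimization and solve it by a greedy rearrangement. In the acquirer's setting we have $\sy = \bY$, so there is no inner minimization to worry about and we simply maximize $I(x:\bY)$ subject to $\ell(x) = l$. First I would observe that fixing the description complexity pins down the size of the index-collection: by Definition \ref{lx}, $\ell(x) = \log(|\bZ|/|Z_x|) = l$ forces $|Z_x| = |\bZ|2^{-l}$. Since $I(x:\bY) = \log|\bY| - H(\bY|x)$ by (\ref{newIxy}), with $H(\bY|x) = \frac{1}{|Z_x|}\sum_{z\in Z_x}\log|Y_z|$ from (\ref{condEnty}), and the denominator $|Z_x| = N \equiv |\bZ|2^{-l}$ is now a constant, maximizing $I(x:\bY)$ is exactly the same as minimizing the plain sum $\sum_{z\in Z_x}\log|Y_z|$ over all collections $Z_x \subseteq \bZ$ of the fixed cardinality $N$.

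Next I would exploit the structure $\bZ = \mP(\bY)$. Every index $z$ is itself a subset of $\bY$, so $|Y_z| = |z| \in \{0, 1, \ldots, |\bY|\}$, and for each $k$ there are exactly $\binom{|\bY|}{k}$ indices with $|Y_z| = k$. Because $\log$ is strictly increasing, the contribution $\log|Y_z|$ depends only on the size $k$ and grows with $k$, so minimizing $\sum_{z\in Z_x}\log|Y_z|$ is a weighted selection in which we greedily take the $N$ admissible indices of smallest size. The empty index ($k = 0$) is excluded: it is non-informative in the sense of Definition \ref{informative} and would make $\log|Y_z|$ undefined, which is why the selection pool runs over $k \geq 1$, matching the lower limit $i = 1$ in the definition of $r(l)$. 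Size-$1$ indices contribute $\log 1 = 0$, which is why the final sum may start at $k = 2$.

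I would then carry out the greedy fill explicitly. With $r(l)$ the smallest $a$ for which $\sum_{i=1}^{a}\binom{|\bY|}{i} \geq N$ (the first level where the cumulative count of available sets reaches $N$), an optimal $Z_x$ takes all $\binom{|\bY|}{i}$ subsets of each size $i = 1, \ldots, r(l)-1$ together with $N - \sum_{i=1}^{r(l)-1}\binom{|\bY|}{i}$ subsets of size $r(l)$. The resulting minimal value of $\sum_{z\in Z_x}\log|Y_z|$ is
\[
\sum_{k=2}^{r(l)-1}\binom{|\bY|}{k}\log k + \left(N - \sum_{i=1}^{r(l)-1}\binom{|\bY|}{i}\right)\log r(l).
\]
Dividing by $N = |\bZ|2^{-l}$ yields the minimal $H(\bY|x)$, and subtracting this from $\log|\bY|$ gives exactly (\ref{istar}).

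The step needing the most care is the optimality of the greedy selection, which I would secure with an exchange argument: if a candidate $Z_x$ contained a set of size $> r(l)$ while omitting some set of size $\leq r(l)$, swapping the two strictly decreases $\sum\log|Y_z|$ (and leaves it unchanged when both sizes equal $r(l)$), so any optimum is realized by the ``smallest sizes first'' profile. Two boundary points accompany this and should be verified: that $N = |\bZ|2^{-l}$ is a positive integer so that $\ell(x) = l$ is actually attainable (in the stated range $1 \leq l \leq \log|\bZ|$, under the understanding that $l$ is chosen to make $|Z_x|$ integral), and that only the multiset of sizes — not which particular subsets are chosen at the top level $r(l)$ — affects the value, so the computed width is genuinely achieved by an admissible input $x$.
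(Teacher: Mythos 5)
Your proposal is correct and follows essentially the same route as the paper's proof: both fix $|Z_x| = |\bZ|2^{-l}$ from the constraint $\ell(x)=l$, reduce the maximization of $I(x:\bY)$ to minimizing $\sum_{z\in Z_x}\log|Y_z|$, and fill $Z_x$ greedily with all non-empty subsets of sizes $1,\ldots,r(l)-1$ plus the remaining quota at size $r(l)$, with optimality justified by the same exchange/replacement reasoning the paper states informally. Your explicit exchange argument and the boundary remarks (integrality of $|\bZ|2^{-l}$, exclusion of the empty set, sum starting at $k=2$ since $\log 1 = 0$) merely make rigorous what the paper leaves implicit.
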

\begin{proof}
Consider
a particular input  $x^*\in\bX$ with a description complexity $\ell(x^*)=l$ and with
a corresponding $Z_{x^*}$ that contains the indices $z$ 
of as many distinct non-empty sets $Y_z$ of the lowest possible cardinality.
By (\ref{lx1}) it follows that $Z_{x^*}$ satisfies $|Z_{x^*}| = |\bZ|2^{-l}$ and
contains all $z$ such that 
  $1\leq |Y_z|\leq r(l)-1$
  in addition to $|\bZ|2^{-l}  - \sum_{i=1}^{r(l)-1}{|\bY|\choose i}$ elements $z$
  for which $|Y_z|=r(l)$.
We therefore have $I(x^*:\bY)$ as equal to the right side of (\ref{istar}). Any other $x$ with $\ell(x)=l$
  must
have $H(\bY|x)\geq H(\bY|x^*)$ since it is formed by replacing one of the sets $Y_z$ above
with a larger set $Y_{z'}$. Hence for such $x$, $I(x:\bY) \leq I(x^*:\bY)$ and
therefore $I^*(l) = I(x^*:\bY)$.
\end{proof}
 The notion of width is more general than that defined above. 
  For instance in functional approximation theory
 the so called $n$-widths are used to measure the approximation error of some rich general class of functions, 
 e.g., Sobolev class, by the closest element of a manifold of simpler function classes. For instance,
the  Kolmogorov width $K_n(F)$ of a class $F$ of functions  (see \cite{Pinkus85})  
is defined as $K_n(F) = \inf_{F_n\subset F}\sup_{f\in F}\inf_{g\in F_n}\|f-g\|$
where $F_n$ varies over all linear subspaces of $F$ of dimensionality $n$.
Thus from this more general set-perspective it is  perhaps not surprising  that such a basic quantity of width
has also an information theoretic interpretation as we have shown in (\ref{IAB}).
  The work of
  \cite{DBLP:journals/dam/MaiorovR98}  considers the  {\em VC-width} of a finite-dimensional set $F$
defined as
\[
\rho^{VC}_n(F)\equiv \inf_{H^n}\sup_{f\in F} \dist(f,H^n)
\] 
where $F\subset \Real^m$ is a target set, $H^n$ runs over the class $\mH^n$ of all sets
 $H^n\subset\Real^m$ of VC-dimension $\VC(H^n)=n$ (see Definition \ref{VC}) and 
 $\dist(f,H^n) \equiv \inf_{h\in H^n}\dist(f,h)$ where $\dist()$ denotes the distance 
  between an element $f\in F$ 
 and  $h\in H^n$
 based on the $l^m_q$-norm, $1\leq q\leq \infty$.
We can make the following analogy with the information width of (\ref{IAB}):
$f$ corresponds to $\sy$,
   $F$ to $\bY$,
$h$ to $z$,
 $n$ corresponds to $l$,
$H^n$ to $x$ (or equivalently to $Z_x$),
the condition $\VC(H^n)=n$ corresponds to 
the condition of having a description complexity  $\ell(x)=l$,
the class $\mH^n$ corresponds to the set $\{x\in \bX: \ell(x) = l\}$,
$\dist(f,h)$ corresponds to $I(z:\sy)$,
$\dist(f, H^n)=\inf_{h\in H^n}\dist(f,h)$ corresponds to  $I(x:\sy) = (1/|Z_x|)\sum_{z\in Z_x}I(z:\sy)$,
$\sup_{f\in F}$ corresponds to $\min_{\sy\subset \bY}$,
and
$\inf_{H^n: \VCs(H^n)=n}$ corresponds to $\max_{x: \ell(x)=l}$.

The notion of information efficiency to be introduced below is based on 
the acquirer's information width $I^*(l)$.
\begin{definition}
\label{eff1}
Denote by
\[
\cost^*(x) \equiv \frac{\ell(x)}{I^*(\ell(x))}
\]
the per-bit cost 
 of  maximal  information   conveyed about an unknown target  $y$ in $\bY$
considering  all possible inputs of the same description complexity as $x$.
  Consider an input $x\in\bX$
 informative for  $\bY$.
   Then the {\em efficiency} of $x$ for $\bY$ is defined by
\[
\eff_\bY(x)  \equiv \frac{\cost^*(x)}{\cost_\bY(x)}
\]
where  the cost  is defined in Definition \ref{eff}.
\end{definition}
\begin{remark}
By definition of $\cost^*(x)$ and $\cost_\sy(x)$
it follows that
\be
\label{effAlt}
\eff_\bY(x) = \frac{I(x:\bY)}{I^*(\ell(x))}.
\ee
\end{remark}
While we will not use it here,  the provider's efficiency can be defined in a similar way.

Let us consider  an example where the above definitions may be applied.
Let   $n$ be a positive integer and denote
 by  $[n]=\{1, \ldots, n\}$.   
Let the target space be $\bY=\{0,1\}^{[n]}$ which consists of all binary functions $g:[n]\rightarrow\{0,1\}$.
Let $\bZ=\mP(\bY)$ be the set of  indices  $z$ of all possible classes  $Y_z\subseteq\bY$ of binary functions $g$ on $[n]$
(as before 
for any set $E$ we denote by $\mP(E)$ its  power set).
Let $\bX=\mP(\bZ)$ consist of all possible (property) sets $Z_x\subseteq\bZ$.
Thus here every possible class of binary functions 
  on $[n]$
and
every possible property of a class 
is represented.
Figure \ref{Istara}(a) shows  $I^*(l)$ and Figure \ref{Istara}(b) displays the cost $\cost^*(l)$ for this example 
as $n=5,6,7$.
From these graphs we see that the width $I^*(l)$ grows at a sub-linear rate  with respect to $l$ 
since the cost strictly increases.

In the next section, 
we apply the theory introduced in the previous sections
to the space of binary functions.


\section{ Binary function classes}
\label{sec4}

Let
$F=\{0, 1\}^{[n]}$
 and write  ${\cal P}(F)$ for the power set which consists
  of all subsets $G\subseteq F$.
  Let  $G\models\mM$ represent the statement ``$G$ satisfies property $\mM$''.
In order to apply  the above framework
we let
 $y$ represent an unknown target $t\in F$
and  $x$  a description object, e.g., a binary string, that describes the
possible properties $\mM$ of sets $G\subseteq F$
which may  contain  $t$. 
Denote by  $x_\mM$  the object that describes property $\mM$.
Our aim is to compute the value of information $I(x_{\mM}:F)$, the description complexity $\ell(x_{\mM})$,  the cost
$\cost_F(x_{\mM})$ and efficiency $\eff_F(x)$ for 
various inputs $x_\mM$.

Note that the set $Z_x$ used  in the previous sections is now a collection of classes $G$, i.e.,
elements of $\mP(F)$, which satisfy a property $\mM$.
We will sometimes refer to this collection   by $\mM$ and write 
$|\mM|$ for its cardinality (which is analogous to $|Z_x|$ in the notation of the preceding sections).

Before we proceed, let us recall a few basic definitions from set theory.
For any fixed subset $E\subseteq[n]$ of cardinality $d$ and any $f\in F$
 denote by $f_{|E}\in\{0, 1\}^d$ the restriction of $f$ on $E$.
For a set  $G\subseteq F$ of functions, the
 set 
 \[
 \tr_G(E)=\{f_{|E}: f\in G\}
 \]
  is called the {\em trace} of $G$ on $E$. The trace is a basic and useful measure of the combinatorial richness of 
 a binary function class
and is related  to its density  (see Chapter
17 in \cite{Bollobas86}).
It has also been shown to relate to various fundamental results in different fields, e.g.,
   statistical 
learning theory \cite{Vapnik1998},
combinatorial geometry \cite{PachAgrawal1995},
graph theory \cite{hw-esrq-87,AnthonyBrightwellCooper95}
and the theory of empirical processes \cite{p-csp-84}.
It is a member of a more general class of  properties that are expressed in terms of certain  allowed or forbidden
restrictions \cite{AnsteeFlemingFurediSali05}.
In this paper we focus on
 properties    based  on the trace of a class 
 which are expressed in terms of a positive integer parameter $d$ in the following general form:
\[
d = \max\{|E|: E\subseteq[n], \text{ condition on } \tr_G(E) \text{ holds}\}.
\]

The first definition taking such form is the so-called Vapnik-Chervonenkis dimension \cite{VapnikChervonenkis71}.
\begin{definition}
\label{VC}
The Vapnik-Chervonenkis dimension of a set $G\subseteq F$, denoted $\VC(G)$, is defined as
\[
\VC(G) \equiv \max\{|E|: E\subseteq[n], |\tr_G(E)|=2^{|E|}\}.
\]
\end{definition}
The next definition considers the other  extreme for the size of the trace.
\begin{definition}
Let $\L(G)$ be defined as
\[
\L(G) \equiv \max\{|E|: E\subseteq[n], |\tr_G(E)|=1\}.
\]
\end{definition}
For any $G\subseteq F$ define the following three properties:
 \begin{eqnarray*}
\mL_d &\equiv& \mbox{`$\L(G)\geq d$'}\\
\mV_d &\equiv & \mbox{`$\VC(G) < d$'}\\
\mV^c_d&\equiv&\mbox{`$\VC(G)\geq d$'}.
\end{eqnarray*}
We now  apply the framework to these
and  other related
properties
(for clarity, we defer some of the proofs to Section \ref{proofs}).
Henceforth,
  for two sequences $a_n$, $b_n$,  we write $a_n\approx b_n$ to denote that $\lim_{n\rightarrow\infty}\frac{a_n}{b_n}=1$
  and $a_n \ll b_n$ denotes $\lim_{n\rightarrow\infty}\frac{a_n}{b_n}=0$.
  Denote the standard normal probability distribution and cumulative distribution by
$\phi(x) = (1/\sqrt{2\pi})\exp(-x^2/2)$ and $\Phi(x) = \int_{-\infty}^x \phi(z)dz$, respectively.
The main results  are stated as Theorems \ref{p1} through \ref{cor2}.
\begin{theorem}
\label{p1}
Let $t$ be an unknown element of $F$. 
 Then the value of information  in knowing
that $t\in G$ where  $G\models\mL_d$,  is 
\bq
I(x_{\mL_d}:F) &=& \log|F| -  \sum_{k\geq 2}\omega_{x_{\mL_d}}(k)\log k\\
 &\approx& n - 
 \frac{
 \Phi\left(-a\right)\log\left(\frac{2^n}{1+2^{d}}\right) + 
2^{-(n-d)/2}\phi(a) + O(2^{-(n-d)})}
{1 - \left(\frac{2^d}{1+2^d}\right)^{2^n}}
\eq
where
\[
a =  2(1+2^{d})2^{-(n+d)/2} - 2^{(n-d)/2}
\]
 and the description complexity of $x_{\mL_d}$ is
\[
\ell(x_{\mL_d}) \approx 2^n\left(\frac{2^d}{1+2^d}\right) - d- c\log n
\]
for some $1\leq c \leq d$, as $n$ increases.
\end{theorem}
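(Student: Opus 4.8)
The plan is to reduce the property $\mL_d$ to a statement about ``$d$-slices'' of $F$ and then read off both quantities from the size-density $\omega_{x_{\mL_d}}(k)$. First I would observe that $\L(G)\ge d$ holds (for $G\neq\emptyset$) iff there is a coordinate set $E\subseteq[n]$ with $|E|=d$ and an assignment $\sigma\in\{0,1\}^{E}$ such that every $f\in G$ agrees with $\sigma$ on $E$; equivalently $G$ lies in the slice $S_{E,\sigma}=\{f\in F: f_{|E}=\sigma\}$, which has exactly $2^{n-d}$ elements. Monotonicity under shrinking $E$ makes ``$|E|=d$'' interchangeable with ``$|E|\ge d$'', so $Z_{x_{\mL_d}}$ is precisely the family of nonempty $G$ contained in at least one of the $\binom{n}{d}2^{d}$ slices. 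Since $I(x_{\mL_d}:F)=n-H(F|x_{\mL_d})=n-\sum_{k\ge2}\omega_{x_{\mL_d}}(k)\log k$ and $\ell(x_{\mL_d})=\log|\bZ|-\log|Z_{x_{\mL_d}}|=2^{n}-\log|Z_{x_{\mL_d}}|$, everything follows once $\omega_{x_{\mL_d}}$ and $|Z_{x_{\mL_d}}|$ are estimated.

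The heart of the argument is the claim that, after discarding the empty set, the density is asymptotically the binomial law
\[
\omega_{x_{\mL_d}}(k)\;\approx\;\binom{2^{n}}{k}\Big(\tfrac{1}{1+2^{d}}\Big)^{k}\Big(\tfrac{2^{d}}{1+2^{d}}\Big)^{2^{n}-k},
\]
so that $|G|$ behaves like a $\mathrm{Bin}\!\big(2^{n},\tfrac{1}{1+2^{d}}\big)$ variable $K$. To obtain this I would count the $k$-subsets satisfying $\mL_d$ by a first-moment (union) bound over $E$, treating the restrictions $f_{|E}$ of the $k$ members as independent uniform $d$-bit strings; the probability that all $k$ share a common restriction on a fixed $E$ is $2^{-d(k-1)}$, so the admissible $k$-sets number a constant times $\binom{2^{n}}{k}2^{-dk}$. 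Normalising through $\sum_{k}\binom{2^{n}}{k}2^{-dk}=(1+2^{-d})^{2^{n}}$ yields exactly the displayed binomial, with mean $\mu=2^{n}/(1+2^{d})$ and variance $\sigma^{2}=2^{n}2^{d}/(1+2^{d})^{2}$, whence $\sigma/\mu=2^{-(n-d)/2}$ and $\omega_{x_{\mL_d}}(0)=\big(\tfrac{2^{d}}{1+2^{d}}\big)^{2^{n}}$. The conditional entropy is then the truncated mean $H(F|x_{\mL_d})=\big(\sum_{k\ge2}\omega_{x_{\mL_d}}(k)\log k\big)/\big(1-\omega_{x_{\mL_d}}(0)\big)$, the denominator being $\Pr[K\ge1]$ and the $k=1$ term dropping because $\log1=0$; this already explains the factor $1-(2^{d}/(1+2^{d}))^{2^{n}}$.

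To evaluate the numerator I would invoke the CLT, writing $K=\mu+\sigma Z$ with $Z$ standard normal and expanding $\log K=\log\mu+\tfrac{\sigma}{\mu}Z-\tfrac12(\tfrac{\sigma}{\mu})^{2}Z^{2}+\cdots$. The cut-off $k=2$ corresponds to the standardised value $a=(2-\mu)/\sigma$, and a short computation confirms $a=2(1+2^{d})2^{-(n+d)/2}-2^{(n-d)/2}$. Using the truncated-normal identities $\int_{a}^{\infty}\phi=\Phi(-a)$ and $\int_{a}^{\infty}z\,\phi(z)\,dz=\phi(a)$ gives
\[
\sum_{k\ge2}\omega_{x_{\mL_d}}(k)\log k\;\approx\;\Phi(-a)\log\mu+\tfrac{\sigma}{\mu}\,\phi(a)+O\!\big((\tfrac{\sigma}{\mu})^{2}\big),
\]
and substituting $\mu=2^{n}/(1+2^{d})$, $\sigma/\mu=2^{-(n-d)/2}$, and $O((\sigma/\mu)^{2})=O(2^{-(n-d)})$ reproduces the stated expression for $I(x_{\mL_d}:F)=n-H(F|x_{\mL_d})$ after dividing by $1-\omega_{x_{\mL_d}}(0)$.

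For the description complexity I would estimate $|Z_{x_{\mL_d}}|$ directly: each slice carries $2^{2^{n-d}}$ subsets, any two distinct slices meet in at most $2^{n-d-1}$ points so their common subsets number at most $2^{2^{n-d-1}}$, negligible against $2^{2^{n-d}}$. Hence inclusion--exclusion collapses to its first term, $\log|Z_{x_{\mL_d}}|\approx 2^{n-d}+d+\log\binom{n}{d}$, and so $\ell(x_{\mL_d})\approx 2^{n}-2^{n-d}-d-\log\binom{n}{d}$; absorbing $\log\binom{n}{d}\asymp c\log n$ with $1\le c\le d$ and writing the leading term in the form $2^{n}\cdot 2^{d}/(1+2^{d})$ gives the claim. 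I expect two steps to carry the real difficulty. The delicate one is justifying the independence heuristic behind the binomial density: the restrictions $f_{|E}$ of distinct $f$ are \emph{not} independent, and the union bound over $E$ overcounts the many small $G$ that fit several slices at once, so one must argue that these corrections disturb neither the shape of $\omega_{x_{\mL_d}}$ nor the truncated moments beyond the stated $O(2^{-(n-d)})$. The second is keeping the leading constant consistent between the two estimates — the slice count naturally yields $2^{n-d}$ whereas the density yields $2^{n}/(1+2^{d})$ — which forces the regime in which they agree (large $d$) and pins down the admissible range $1\le c\le d$ of the logarithmic correction.
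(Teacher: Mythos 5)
Your proposal is correct in substance, and it splits naturally into a half that mirrors the paper and a half that does not. For the information value you follow the paper's proof almost step for step: the paper likewise inverts via Bayes to get $\omega_{x}(k)=P_n(k\,|\,\mL_d)$, establishes $P^*_{n,k}(\mL_d)\approx 2^{-d(k-1)}A(N,d)$ with a $k$-independent multiplicity factor $\frac{\log N}{d}\le A(N,d)\le \frac{\log^d N}{d!}$ that cancels in the ratio, arrives at the same binomial law with $p=1/(1+2^d)$ and normalization $1-(1-p)^{2^n}$, and evaluates the truncated mean of $\log k$ by DeMoivre--Laplace with exactly your cutoff $a=(2-\mu)/\sigma$, the same truncated-normal identities, and the same $O((\sigma/\mu)^2)$ error. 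The device you lack for the ``delicate'' independence step is the paper's random-matrix measure $Q^*_{n,k}$ (iid uniform columns, Lemmas \ref{lemQS} and \ref{lem1}), under which $2^{-d(k-1)}$ is exact for each fixed $E$ and the multiplicity over $E$ is sandwiched between the disjoint-blocks lower bound and the union bound (Lemma \ref{lem2}); this sandwich is also where the range $1\le c\le d$ comes from. For the description complexity you genuinely diverge: the paper computes $\ell(x_{\mL_d})=-\log P_n(\mL_d)$ by summing the same approximation $2^{-d(k-1)}A(N,d)$ against $P_n(k)$, whereas you count $|Z_{x_{\mL_d}}|$ directly through the slices, and your route is arguably the sharper one. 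Inclusion--exclusion is indeed dominated by its first term, giving $\log|Z_{x_{\mL_d}}|=2^{n-d}+d+\log\binom{n}{d}+o(1)$, and this buys exactly the observation you close with: $2^n-2^{n-d}$ matches the stated $2^n\cdot 2^d/(1+2^d)$ only when $d=d_n\to\infty$, since for fixed $d$ the limiting ratios $1-2^{-d}$ and $2^d/(1+2^d)$ differ (e.g.\ $1/2$ versus $2/3$ at $d=1$). The source of the discrepancy is that the paper extends the $2^{-dk}$ scaling into the bulk region $k\asymp 2^{n-d}$, where the true ratio is $\binom{2^{n-d}}{k}/\binom{2^n}{k}$ and the approximation fails at the level of exponents, so your direct count is the more trustworthy estimate there; it also pins the logarithmic correction at $\log\binom{n}{d}\approx d\log n$, i.e.\ the top end $c\approx d$ of the paper's admissible range.
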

\begin{remark}
For large $n$, we have the following estimates:
\[
I(x_{\mL_d}:F) \simeq n - \log\left(\frac{2^n}{1+2^d}\right) \simeq  d
\]
and
\[
\ell(x_{\mL_d}) \simeq 2^n - d.
\]
The cost is estimated by 
\[
\cost_F(x_{\mL_d}) \simeq \frac{2^n-d}{d}.
\]
 \end{remark}

The next result is for  property $\mV^c_d$. 
\begin{theorem}
\label{p2}
Let $t$ be an unknown element of $F$.
Denote by
\(
a = (2^n-2^{d+1})2^{-n/2}.
\)
Then the value of information  in knowing
that $t\in G$,  $G\models\mV^c_d$, is
\bq
I(x_{\mV^c_d}:F) &=& \log|F| -  \sum_{k\geq 2}\omega_{x_{\mV^c_d}}(k)\log k \\
 &\approx& n -  \frac{(n-1)\left(2^{n}\Phi\left(a \right)    + 2^{n/2}\phi\left(a \right) 
  \left(1+\frac{a^2}{(n-1)2^n}\right)  \right)}
{2^n\Phi(a) + 2^{n/2}\phi(a)}
\eq
with increasing $n$.
Assume that  $d=d_n > \log n$ then the description complexity of $x_{\mV^c_d}$ satisfies
\[
\ell(x_{\mV^c_d}) \approx d (2^d + 1) + \log(d) - \log(2^n\Phi(a) + 2^{n/2}\phi(a)) -\log n + 1.
\]
\end{theorem}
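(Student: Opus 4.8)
The plan is to reduce both assertions to the single combinatorial problem of counting, for each cardinality $k$, the sets $G\subseteq F$ with $|G|=k$ and $\VC(G)\ge d$, and then to evaluate the resulting sums by a Gaussian (local central limit) approximation. Writing $C_d(k)=|\{G\subseteq F:\ |G|=k,\ \VC(G)\ge d\}|$ and $N_d=\sum_k C_d(k)=|Z_{x_{\mV^c_d}}|$, Definition \ref{lx} gives $\ell(x_{\mV^c_d})=\log|\bZ|-\log N_d=2^n-\log N_d$, while (\ref{newIxy}) together with (\ref{omegax}) gives $I(x_{\mV^c_d}:F)=\log|F|-\sum_{k\ge 2}\omega_{x_{\mV^c_d}}(k)\log k$ with $\omega_{x_{\mV^c_d}}(k)=C_d(k)/N_d$ and $\log|F|=n$. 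Thus everything rests on $C_d(k)$, and I would first record the structural constraint that $\VC(G)\ge d$ forces $|G|\ge 2^d$ (any shattered $E$ with $|E|=d$ needs all $2^d$ patterns present), so the density is supported on $k\ge 2^d$.

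Next I would enumerate $C_d(k)$ through its shattering witnesses. Since $\VC(G)\ge d$ means $G$ shatters some $E$ with $|E|=d$, I would group the $2^n$ functions by their restriction to a fixed $E$ into $2^d$ pattern-classes of size $2^{n-d}$ each, count the sets that realise all $2^d$ patterns, and sum over the $\binom{n}{d}$ choices of $E$. The dominant, first inclusion--exclusion term is what should produce the leading scale $d(2^d+1)$ in $\log N_d$ (the $2^d$ forced patterns, each carrying a $d$-bit address on $E$), together with a $\binom{n}{d}$-type factor (the $\log d-\log n$ and $+1$ corrections) and a free-choice factor over the remaining functions. A key bookkeeping point is that pinning down a shattered core effectively weights a size-$k$ set by a factor proportional to $k$ relative to the plain $\binom{2^n}{k}$; this is what turns the later averages into $|G|$-weighted averages and is the source of the combination $2^n\Phi(a)+2^{n/2}\phi(a)$ appearing in both displayed formulas.

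I would then pass to the Gaussian regime. Approximating $\binom{2^n}{k}$ by the density of a normal law with mean $2^{n-1}$ and standard deviation $2^{n/2-1}$, the standardisation $z=(2k-2^n)2^{-n/2}$ sends the cutoff $k=2^d$ to $z=-a$ with $a=(2^n-2^{d+1})2^{-n/2}$, so the relevant tail integrals become $\int_{-a}^{\infty}\phi=\Phi(a)$, $\int_{-a}^{\infty}z\phi(z)\,dz=\phi(a)$, and $\int_{-a}^{\infty}z^2\phi(z)\,dz=\Phi(a)-a\phi(a)$. Expanding $\log_2 k=(n-1)+\log_2(1+2^{-n/2}z)$ to second order and integrating the $|G|$-weighted density against these moments yields $\sum_k\omega_{x_{\mV^c_d}}(k)\log k$ as the stated ratio, the cubic moment $\int_{-a}^{\infty}z^3\phi(z)\,dz=(a^2+2)\phi(a)$ being precisely what generates the correction $\bigl(1+\tfrac{a^2}{(n-1)2^n}\bigr)$; subtracting from $\log|F|=n$ gives the claimed estimate for $I(x_{\mV^c_d}:F)$. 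Feeding the same normalising sum $N_d$ into $\ell=2^n-\log N_d$ and invoking $d=d_n>\log n$ to discard subdominant pieces then delivers the stated description-complexity formula.

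The main obstacle is the enumeration of $C_d(k)$: shattering \emph{some} $E$ is a union over the $\binom{n}{d}$ candidate sets whose events overlap heavily, so controlling the inclusion--exclusion accurately enough to pin down not just the leading $d(2^d+1)$ but also the $\log d$, $\log n$ and additive constant is delicate. A secondary difficulty is justifying the Gaussian approximation uniformly down into the lower tail near $k=2^d$ (a local central limit estimate with explicit Stirling error), which is exactly where the hypothesis $d>\log n$ is needed to keep the tail corrections of the advertised order. I expect the bulk of the work, and any hidden restriction on the admissible range of $d$, to reside in this enumeration step.
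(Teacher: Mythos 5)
Your outer scaffolding — reducing both claims to the $k$-profile of classes with $\VC(G)\ge d$, noting the support constraint $k\ge 2^d$, and the DeMoivre--Laplace step with the cutoff $k=2^d$ standardising to $-a$ — coincides with the paper's. But the crux of the paper's proof is a step you replace with an unsupported assertion, and your replacement would not produce the stated formulas. The paper does \emph{not} run inclusion--exclusion over the $\binom{n}{d}$ witness sets $E$. Instead (Lemma \ref{lem3}, via the matrix model $Q^*_{n,k}$ and Lemma \ref{lem1}) it counts only occurrences of the complete $d\times 2^d$ matrix $U_d$ on $\lfloor n/d\rfloor$ \emph{disjoint} row blocks, extrapolated to $P^*_{n,k}(\mV^c_d)\approx 1-(1-\beta)^{rk}$ with $\beta=2^{-d2^d}$, $r=n/(d2^d)$, and then linearised to $P^*_{n,k}(\mV^c_d)\approx r k\beta$. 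That linearity in $k$ is exactly what turns the conditional entropy into the $k$-weighted ratio $\sum_{k\ge 2^d}kP_n(k)\log k\,/\sum_{j\ge 2^d}jP_n(j)$ — the constants $r\beta$ cancel there, which is why $2^n\Phi(a)+2^{n/2}\phi(a)$ appears — while in the description complexity the constants do \emph{not} cancel, and $-\log(r\beta)=d(2^d+1)+\log d-\log n$ supplies the theorem's leading terms. Your route delivers neither ingredient: (i) the per-witness probability that a fixed $E$ is shattered is a coupon-collector quantity, roughly $1-2^d(1-2^{-d})^k$, which \emph{saturates at} $1$ once $k\gg d2^d$ rather than growing linearly in $k$; and your heuristic that ``pinning down a shattered core weights a size-$k$ set by a factor proportional to $k$'' is never derived and is false on its face (choosing a $2^d$-element core contributes $\binom{k}{2^d}$, not $k$); (ii) summing over all witnesses contributes $\log\binom{n}{d}\approx d\log n-\log d!$ to the exponent, which is not the $\log d-\log n$ the theorem requires — that combination is $-\log(n/d)$ with its sign flipped, i.e.\ it records the $\lfloor n/d\rfloor$ disjoint blocks, and the $2^{-d2^d}$ per block (one exact copy of $U_d$, in a fixed column order) rather than the coupon-collector factor $(2^d)!\,2^{-d2^d}$. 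So your anticipated ``first inclusion--exclusion term'' cannot yield $d(2^d+1)+\log d-\log n+1$, and the heavy overlap control you flag as the main obstacle is a problem the paper simply never confronts, because it works with the disjoint-block model throughout.

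Two smaller points. First, you place the hypothesis $d_n>\log n$ in the Gaussian tail control near $k=2^d$; in the paper it instead ensures $n<d2^d$, so that the linearisation $1-(1-\beta)^{rk}\approx rk\beta$ stays meaningful (nonnegative) over the whole range $k\le 2^n$ — the normal approximation itself is applied only in the bulk, where the weight $kP_n(k)$ concentrates near $k=2^{n-1}$, which is also why the lower limit $2^d$ enters only through $a$. Second, your Gaussian toolkit (the tail moments $\Phi(a)$, $\phi(a)$, $\Phi(a)-a\phi(a)$ and the expansion of $\log k$ about $2^{n-1}$) is consistent with the paper's computation, though in the paper the correction $\bigl(1+\tfrac{a^2}{(n-1)2^n}\bigr)$ arises from the interaction of the linear weight $(\sigma x+\mu)$ with the expansion of $\log(\sigma x+\mu)$, not from a cubic moment per se. These are reparable details; the irreparable gap is the enumeration step, where your proposed count contradicts the linear-in-$k$ weight on which both displayed formulas depend.
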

\begin{remark}
For large $n$,
 the information  value is approximately
 \[
 I(x_{\mV^c_d}:F)\simeq 1
 \]
 and
 \[
 \ell(x_{\mV^c_d}) \simeq d 2^d -n -\log\left(\frac{n}{d}\right)
 \]
 thus
  \[
 \cost_F(x_{\mV^c_d}) \simeq \ell(x_{\mV^c_d}).
 \]
  We note that the description length increases with respect to $d$ implying that
 the proportion of classes with a VC-dimension  larger than $d$ 
 decreases with  $d$.  With respect to $n$ it behaves oppositely.

\end{remark}

The property of having an (upper) bounded VC-dimension (or trace) has been widely studied 
in  numerous fields (see the earlier discussion).
For instance in statistical learning theory \cite{Vapnik1998,BoucheronBousquetLugosi2004a}
the important property of convergence of the empirical averages to the means occurs
 uniformly over all elements of an infinite
 class provided that it satisfies this property.
It is thus interesting to study the property
\(
{\mV_d}
\)
defined above even for a finite class of binary functions.
\begin{theorem}
\label{cor1}
Let $t$ be an unknown element of $F$.
The  value of information  in knowing that $t\in G$, $G\models{\mV}_d$ is
\bq
I(x_{{\mV}_d}:F)  &\approx&  1 - o(2^{-n/2})
\eq
with  $n$ and $d=d_n$ increasing such that  $n < d_n2^{d_n}$.
The  description complexity of $x_{{\mV}_d}$ is
\[
\ell(x_{ {\mV}_d }) = 
-\log \left(
1- 2^{
-\ell(	
x_{{\mV}^c_d} 
)	
} 	
  \right)
\]
where $\ell(x_{{\mV}^c_d})$ is as in Theorem \ref{p2}.
\end{theorem}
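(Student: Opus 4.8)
The plan is to treat the two assertions separately, exploiting that $\mV_d$ and $\mV^c_d$ are \emph{complementary} properties of the classes $G\subseteq F$: every $G$ satisfies exactly one of $\VC(G)<d$ and $\VC(G)\geq d$. Hence the collection $Z_{x_{\mV_d}}$ of classes witnessing $\mV_d$ is exactly $\bZ\setminus Z_{x_{\mV^c_d}}$. The description complexity will then drop out of the complement lemma proved earlier, while the information value is obtained by comparing the conditional entropy of $\mV_d$ against the global average of $\log|G|$ over all of $\bZ$, and controlling the small correction contributed by the rare $\mV^c_d$-classes through Theorem~\ref{p2}.

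For the description complexity I would simply note that $x_{\mV_d}$ is the input $(x_{\mV^c_d})^{c}$ attached to the complementary index set $Z^{c}_{x_{\mV^c_d}}=\bZ\setminus Z_{x_{\mV^c_d}}$. Applying the lemma that gives $\ell(x^{c})=-\log\bigl(1-2^{-\ell(x)}\bigr)$ with $x=x_{\mV^c_d}$ produces the stated identity immediately, and inserting the estimate for $\ell(x_{\mV^c_d})$ from Theorem~\ref{p2} finishes this part with no further computation.

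For the information value, recall from (\ref{newIxy}) that $I(x_{\mV_d}:F)=n-H(\bY|x_{\mV_d})$ with $n=\log|F|$ and $\bY=F$. Writing $H_0\equiv|\bZ|^{-1}\sum_{z\in\bZ}\log|Y_z|$ for the mean of $\log|G|$ over \emph{all} $G\subseteq F$ and splitting this sum across the disjoint union $\bZ=Z_{x_{\mV_d}}\cup Z_{x_{\mV^c_d}}$ gives, with $\beta\equiv|Z_{x_{\mV^c_d}}|/|\bZ|=2^{-\ell(x_{\mV^c_d})}$,
\[
H(\bY|x_{\mV_d})=H_0+\frac{\beta}{1-\beta}\bigl(H_0-H(\bY|x_{\mV^c_d})\bigr).
\]
The first step is to evaluate $H_0$. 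Since $|\bZ|=2^{2^{n}}$ and there are $\binom{2^{n}}{k}$ classes of size $k$, $H_0$ is the expectation of $\log k$ for $k$ distributed as $\mathrm{Bin}(2^{n},\tfrac12)$, whose mean is $2^{n-1}$ and standard deviation $2^{n/2-1}$; a second-order Taylor expansion of $\log$ about $2^{n-1}$, together with the sharp concentration of this binomial, yields $H_0=(n-1)+O(2^{-n})$, the $O(2^{-n})$ being the Jensen gap. The second step bounds the remaining term: by the hypothesis $n<d\,2^{d}$ and Theorem~\ref{p2} we have $\ell(x_{\mV^c_d})>n$, so $\beta\leq 2^{-n}$; and since $H_0$ and $H(\bY|x_{\mV^c_d})$ are both averages of $\log|G|\in[0,n]$, their difference is at most $n$ in absolute value, whence $\tfrac{\beta}{1-\beta}\bigl|H_0-H(\bY|x_{\mV^c_d})\bigr|=O(n\,2^{-n})$. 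Adding the two contributions gives $H(\bY|x_{\mV_d})=(n-1)+o(2^{-n/2})$ and therefore $I(x_{\mV_d}:F)=1-o(2^{-n/2})$. Note that only the $\ell$-estimate of Theorem~\ref{p2} is needed here, not its information value.

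The main obstacle is the accuracy demanded in evaluating $H_0$: the assertion is not merely $H_0\approx n-1$ but that its deviation lies below the scale $2^{-n/2}$. This is delicate because the fluctuations of $\log k$ about $n-1$ are themselves of order $2^{-n/2}$ (the standard deviation $2^{n/2-1}$ divided by the mean $2^{n-1}$, times $1/\ln 2$), so I must show that the \emph{mean} deviation---the Jensen gap plus the contribution of the binomial tails---is genuinely of the smaller order $O(2^{-n})$. Concretely this needs an exponential tail bound for $\mathrm{Bin}(2^{n},\tfrac12)$ to confine the analysis to the window $|k-2^{n-1}|\lesssim 2^{n/2}\sqrt{n}$, where the quadratic remainder in the expansion of $\log k$ is controlled, while the complementary-class bookkeeping of the second step is routine once $\ell(x_{\mV^c_d})>n$ is supplied by Theorem~\ref{p2}.
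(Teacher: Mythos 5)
Your description-complexity argument is exactly the paper's: $Z_{x_{\mV_d}}=\bZ\setminus Z_{x_{\mV^c_d}}$, so the complement lemma $\ell(x^c)=-\log\bigl(1-2^{-\ell(x)}\bigr)$ gives the identity with nothing more to do. Your decomposition identity $H(\bY|x_{\mV_d})=H_0+\frac{\beta}{1-\beta}\bigl(H_0-H(\bY|x_{\mV^c_d})\bigr)$ is also algebraically correct, and it organizes the entropy computation differently from the paper, which instead recomputes the conditional entropy directly from $P^*_{n,k}(\mV_d)\approx 1-r\beta k$ (via (\ref{bet})) and evaluates the ratio (\ref{vc==d}) of DeMoivre--Laplace integrals. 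Your evaluation $H_0=(n-1)+O(2^{-n})$ via the Jensen gap and binomial concentration is likewise sound (the empty class, of weight $2^{-2^n}$, is negligible). The genuine gap is in your second step: the claim that the hypothesis $n<d_n2^{d_n}$ together with Theorem \ref{p2} yields $\ell(x_{\mV^c_d})>n$, hence $\beta\leq 2^{-n}$. Theorem \ref{p2} gives $\ell(x_{\mV^c_d})\approx d2^d-n-\log(n/d)$ up to lower-order terms, and this exceeds $n$ only when $d2^d\gtrsim 2n$. The hypothesis permits, say, $d2^d=\tfrac{3}{2}n$, in which case $\beta=2^{-\ell(x_{\mV^c_d})}\approx (n/d)2^{-n/2}$, and your crude bound $\bigl|H_0-H(\bY|x_{\mV^c_d})\bigr|\leq n$ then only controls the correction by roughly $(n^2/d)2^{-n/2}$, which is not $o(2^{-n/2})$; for $d2^d=n(1+\epsilon)$ with $\epsilon n=O(\log n)$ the bound degrades to merely polynomially small, hopeless against the claimed precision.

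The theorem nonetheless holds in that regime, because the quantity you bound by $n$ is actually tiny: both densities are dominated by $k=2^{n-1}$ (as the paper observes in its discussion of balanced properties), so $H(\bY|x_{\mV^c_d})$ is itself $(n-1)+O(2^{-n})$ and hence $H_0-H(\bY|x_{\mV^c_d})=O(2^{-n})$, making the correction $o(2^{-n/2})$ for any $\beta$ bounded away from $1$ --- and $\beta<1/2$ does follow from $n<d2^d$. In the paper's proof this cancellation happens automatically: the $rs\beta$ terms appear in both the numerator $\log(N/2)\bigl(\Phi(b)-rs\beta\bigr)$ and the denominator $\bigl(1-rs\beta\bigr)$ of (\ref{vc==d}) and cancel to first order, leaving only super-exponentially small remainders. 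So to repair your argument you must estimate $H_0-H(\bY|x_{\mV^c_d})$ itself rather than bound it trivially, which means you do need the entropy estimate implicit in Theorem \ref{p2} to its stated precision --- contrary to your closing remark that only the $\ell$-estimate of Theorem \ref{p2} is required.
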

\begin{remark}
Both the description complexity and the cost of information are approximated as
  \[
 \cost_F(x_{{\mV}_d}) \simeq  \ell(x_{\mV_d}) \simeq -\log(1-2^{-(d2^d -n -\log(\frac{n}{d}))}) .
 \]
 Relating to Remark \ref{p1mp},  while $\ell(x_{\mV_d})$ increases with respect to $n$
   and hence
 the proportion of classes with the  property $\mV_d$ decreases as $2^{-\ell(x_{\mV_d})}$,  the actual number of 
 binary function classes that have this property
   (i.e., the cardinality of the corresponding set $Z_x$)
 {\em increases} with $n$ since 
 \[
 |Z_{x_{\mV_d}}| = |\bZ|\,2^{-\ell({x_{\mV_d}})} = 2^{2^n} \left(1-2^{-(d2^d -n -\log(\frac{n}{d}))}\right).
 \]
    The number of classes that have the complement property $\mV^c_d$ also clearly increases
    since $\ell(x_{\mV^c_d})$ decreases with $n$.
 We note that the description length decreases with respect to $d$ implying that
 the proportion of classes with a VC-dimension no larger than $d$ 
 increases with  $d$.
   \end{remark}

As another related case, consider 
an input $x$
which in addition to
conveying  that $t\in G$ with $\VC(G)<d$  also provides
a labeled sample $S_m = \{(\xi_i, \zeta_i)\}_{i=1}^m$,
$\xi_i\in [n]$, $\zeta_i = t(\xi_i)$, $1\leq i\leq m$.
This means that
for all $f\in G$, $f(\xi_i) = \zeta_i$,
 $1\leq i\leq m$. 
We express this by stating that $G$ satisfies the property
\[
\mV_d(S_m) \equiv  \mbox{`$\VC(G) < d$, $G_{|\xi}=\zeta$'}
\]
where $G_{|\xi}$ denotes the set of restrictions  $\{f_{|\xi}: f\in G\}$, $f_{|\xi} = [f(\xi_1), \ldots, f(\xi_m)]$ and
$\zeta = [\zeta_1, \ldots, \zeta_m]$.
The following result states the value of  information and cost for property $\mV_d(S_m)$.
\begin{theorem}
\label{cor2}

Let $t$ be an unknown element of $F$ and
  $S_m = \{(\xi_i,t(\xi_i))\}_{i=1}^m$  a sample.
Then the  value of information in knowing that $t\in G$ where $G\models{\mV}_d(S_m)$ is
\bq
I(x_{{\mV}_d(S_m)}:F)  &\approx&  m - o(2^{-(n-m)/2})
\eq
with  $n$ and $d=d_n$ increasing such that  $n < d_n2^{d_n}$.
The  description complexity of $x_{{\mV}_d(S_m)}$ is
\[
\ell(x_{ {\mV}_d(S_m) }) \approx  2^n(1+\log(1-p)) +   \frac{n-m}{d2^{d(1+2^d)}}\left(\Phi(a)2^{n-m} + \phi(a)2^{(n-m)/2}  \right)+
 \left(1-p\right)^{2^n} 
\]
where $p=2^{-m}/(2^{-m}+1)$, $a=(2^np-2^d)/\sigma$,  $\sigma=\sqrt{2^{n}p(1-p)}$.
\end{theorem}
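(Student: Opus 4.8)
\noindent The plan is to recognize that the sample $S_m$ collapses the target space $F$ to a subcube, so that property $\mV_d(S_m)$ reduces to the purely VC-based property $\mV_d$ of Theorem~\ref{cor1} on a domain of size $2^{n-m}$; the information value and description complexity then follow by substituting $n\mapsto n-m$ into the estimates already established. Concretely, assuming the sample abscissae $\xi_1,\dots,\xi_m$ are distinct, a class $G\subseteq F$ satisfies $G_{|\xi}=\zeta$ exactly when every $f\in G$ agrees with $t$ on $\{\xi_1,\dots,\xi_m\}$; hence $G\subseteq F_\zeta$, where $F_\zeta=\{f\in F:\ f(\xi_i)=\zeta_i,\ 1\le i\le m\}$ satisfies $|F_\zeta|=2^{n-m}$ and is isomorphic to $\{0,1\}^{[n-m]}$. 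Since all functions of such a $G$ share a single trace on the sample coordinates, no set meeting those coordinates can be shattered, so the value $\VC(G)$ computed over $[n]$ equals the VC-dimension of $G$ regarded as a subset of $\{0,1\}^{[n-m]}$. Therefore the witnessing collection $Z_{x_{\mV_d(S_m)}}$ is in a cardinality-preserving bijection with the collection $Z_{x_{\mV_d}}$ of $\VC<d$ classes on $[n-m]$, a bijection that also preserves each $|G|$.

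Next I would compute the information value. By (\ref{newIxy}) with $\bY=F$ and $\log|F|=n$, $I(x_{\mV_d(S_m)}:F)=n-H(F\mid x_{\mV_d(S_m)})$, and by the bijection above the conditional entropy is the reduced-domain entropy of property $\mV_d$, giving $I(x_{\mV_d(S_m)}:F)=m+I_{[n-m]}(x_{\mV_d})$. The hypothesis $n<d_n2^{d_n}$ implies $n-m<d_n2^{d_n}$, so Theorem~\ref{cor1} applies on $[n-m]$ and yields $I_{[n-m]}(x_{\mV_d})\approx 1-o(2^{-(n-m)/2})$. Thus $I(x_{\mV_d(S_m)}:F)=m+1-o(2^{-(n-m)/2})\approx m-o(2^{-(n-m)/2})$, the additive constant being swallowed by the convention $a_n\approx b_n\Leftrightarrow a_n/b_n\to1$.

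For the description complexity I would start from $\ell(x_{\mV_d(S_m)})=\log(|\bZ|/|Z_{x_{\mV_d(S_m)}}|)=2^n-\log|Z_{x_{\mV_d(S_m)}}|$ and estimate the count directly over the $2^n$ functions of $F$, modeling membership of a class by independent inclusion of each function with the effective probability $p=2^{-m}/(1+2^{-m})$ that encodes the $2^{-m}$ odds of a function being sample-consistent; note that $p=1/2$ when $m=0$, recovering Theorem~\ref{cor1}. The requirement that all sample-inconsistent functions be excluded yields the leading term $2^n(1+\log(1-p))$, while the VC bound contributes, through the same Gaussian approximation used in Theorem~\ref{p2} with $n$ replaced by $n-m$, the terms carried by $\Phi(a)$ and $\phi(a)$ with $a=(2^np-2^d)/\sigma$ and $\sigma=\sqrt{2^np(1-p)}$; here $2^np\approx 2^{n-m}$ and $\sigma\approx 2^{(n-m)/2}$, so $a$ is precisely the reduced-domain parameter of Theorem~\ref{p2}. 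Collecting these terms produces the stated expression for $\ell(x_{\mV_d(S_m)})$, and dividing by $I(x_{\mV_d(S_m)}:F)\approx m$ gives the cost.

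The main obstacle I anticipate is making the Gaussian approximation of the binomial-type sum that counts sample-consistent $\VC<d$ classes uniform in the coupled regime $n,d_n,m$ with $n<d_n2^{d_n}$, and tracking its subleading terms through the logarithm of a difference so that the $\Phi(a)$ and $\phi(a)$ contributions survive rather than being absorbed into the dominant $2^n(1+\log(1-p))$. A secondary point is to discharge the distinctness assumption on $\xi_1,\dots,\xi_m$ (or to state the result for an effective sample of distinct queries), since repeated abscissae reduce the effective $m$ and must be handled in the reduction above.
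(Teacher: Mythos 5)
Your route to the information value is genuinely different from the paper's, and in that respect sharper. The paper never uses your bijection: it factorizes $P_n(\mF\models\mV_d(S_m)\,|\,|\mF|=k)$ into a consistency factor, approximated in the matrix model by $\gamma^k$ with $\gamma=2^{-m}$, times $P^*_{n-m,k}(\mV_d)\approx 1-r\beta k$, recognizes the resulting tilted binomial with parameters $N=2^n$ and $p=\gamma/(1+\gamma)$, and re-runs the DeMoivre--Laplace computation of Theorem \ref{p2} from scratch to get $H(F|x)\to\log(Np)$ and hence $I\approx m$. Your cardinality-preserving bijection $G\mapsto G'$ onto $\VC<d$ classes on a $2^{n-m}$-point domain is exactly correct (shattered sets must avoid the sample coordinates, since every $f\in G$ is pinned there), so $\omega_x(k)$ equals the reduced-domain density identically and $I=m+I_{[n-m]}(x_{\mV_d})\approx m+1$ follows from Theorem \ref{cor1} with no Gaussian analysis at all. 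Do note that your step ``$m+1\approx m$'' requires $m=m_n\to\infty$ under the ratio convention, a hypothesis the theorem statement does not make explicit; the paper's own route carries the same caveat, since it really yields $m+\log(1+2^{-m})$.

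For the description complexity you abandon your bijection and adopt the paper's tilted model, which is indeed what the paper does ($\ell=-\log P_n(\mV_d(S_m))$, with main term $((1+\gamma)/2)^{2^n}$ giving $2^n(1+\log(1-p))$), so as a reconstruction your second half is on target. But it quietly contradicts your first half at second order: the exact bijection gives $\ell=2^n-\log|Z_x|=2^n-2^{n-m}+\ell_{[n-m]}(x_{\mV_d})$, whereas $2^n\log(1-p)\approx-2^{n-m}/\ln 2$. The source of the gap is the approximation $P^*_{n,k}(\mF_{|\xi}=\zeta)\approx 2^{-mk}$: the exact value is $\binom{2^{n-m}}{k}/\binom{2^n}{k}$, and the two agree only for $k\ll 2^{n-m}$, while the tilted binomial concentrates at $k\approx Np\approx 2^{n-m}$, exactly where the approximation degrades (this also explains the typical-size discrepancy $2^{n-m}$ versus the exact $2^{n-m-1}$, i.e., your extra $+1$ in the information value; Lemma \ref{lem1} cannot repair it, since it controls only $o(1)$ additive errors, which is vacuous for events of probability $\gamma^k$). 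For the same reason your remark that $a$ is ``precisely'' the reduced-domain parameter of Theorem \ref{p2} is off by the same factor of two in the mean. So either carry your bijection through the second half as well (obtaining $2^n-2^{n-m}$ plus the small correction from Theorem \ref{cor1}) or state explicitly that the displayed formula rests on the $\gamma^k$ approximation, as the paper's proof does. Your point about requiring distinct $\xi_1,\ldots,\xi_m$ is well taken; the paper assumes it implicitly by passing to $P^*_{n-m,k}$.
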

\begin{remark}
The description complexity 
is estimated by
\[
\ell(x_{ {\mV}_d(S_m) }) \simeq 2^n\left(1 + \frac{n-m}{d2^{d(1+2^d)+m}} +\log(1-p)\right)
\]
and the cost of information is
 \[
 \cost_F(x_{{\mV}_d(S_m)}) \simeq \frac{\ell(x_{ {\mV}_d(S_m) })}{m}.
 \]
 \end{remark}
 \begin{remark}
The dependence of the description complexity on $d$
 disappears rapidly with increasing $d$,  the effect of $m$ 
remains minor which effectively makes $\ell(x_{ {\mV}_d(S_m) })$
almost take  the maximal possible value of $2^n$.
Thus the proportion of classes which satisfy property ${\mV}_d(S_m)$
is very small.
   \end{remark}

\subsection{Balanced properties}

Theorems  \ref{p2} and \ref{cor1}  pertain to property $\mV_d^c$ and its complement $\mV_d$.
It is interesting that in both cases the information value is approximately equal to $1$.
If we denote by  $P^*_{n,k}$  a uniform probability distribution over the space of classes $G\subset F$ conditioned
on $|G|=k$ (this will be defined later in a more precise context   in (\ref{uniformclass}))
then, as is shown  later,  $P^*_{n,k}(\mV_d)$
and $P^*_{n,k}(\mV^c_d)$  vary approximately linearly with respect to $k$. 
Thus in both cases the conditional density (\ref{omegax}) 
is dominated by the value of $k=2^{n-1}$ and hence both have approximately the same conditional entropies (\ref{condEnty})
and information values.
Let us 
define the following:
\begin{definition}
A property $\mM$ is called {\em balanced} if 
\[
I(x_{\mM}:F) = I(x_{\mM^c}:F).
\]
\end{definition}
We may characterize some sufficient conditions for $\mM$ to be balanced.
First, as in the case of  property $\mV_d$ and more generally for any property $\mM$
a  sufficient condition for this to hold 
 is to have a  density  (and that of its complement $\mM^c$) dominated by 
some cardinality value $k^*$. 
Representing  $\omega_{x_{\mM}}(k)$
by a posterior probability function $P_n(k|\mM)$, for instance as in (\ref{omega2}) for $\mM=\mL_d$,
makes
the conditional entropies $H(F|x_\mM)$ and $H(F|x_{\mM^c})$ be  approximately the same. 
A  stricter sufficient condition is to have
\[
\omega_{x_\mM}(k) = \omega_{x_{\mM^c}}(k)
\]
for every $k$.
This implies  the condition  that
\[
P_n(k|\mM) = P_n(k|\mM^c)
\]
which using Bayes rule gives 
\[
\frac{P(\mM^c|k)}{P(\mM|k)} = \frac{P(\mM^c)}{P(\mM)}, \quad \text{ for all }2\leq k \leq 2^n.
\]
In words, this condition says that
%
%
the bias of favoring a class $G$ as satisfying 
property $\mM$  versus  $\mM^c$ (i.e., the ratio of their probabilities)
should be constant with respect to the cardinality $k$ of $G$. 
Any such   property
is therefore characterized by  certain features
of a class $G$ that are invariant to its size, i.e., 
  if the size of $G$ is provided in advance
 then no information is gained about whether $G$ satisfies $\mM$ or its complement $\mM^c$.

In contrast, property $\mL_d$ is an example of a very unbalanced property.
It is an example of a general property  whose posterior
 function decreases fast with respect to $k$ as we now consider:

\mbox{}\\\noindent {\em Example}:
Let $\mQ$ be a property with a distribution $P^*_{n,k}(\mQ) = c\alpha^k$, $0 < \alpha < 1$, $c>0$.
In a similar way as Theorem \ref{p1} is proved we obtain that the information value of this property tends to
\[
I(x_{\mQ}:F) \approx n- \frac{\Phi(-a)\log \left(2^n\alpha/(1+\alpha)\right)  +  \phi(a)/\sqrt{\alpha 2^n} + O(1/(\alpha2^n)) }
{1-(1+\alpha)^{-2^{n}}}
\]
with increasing $n$ where $a=(2-2^np)/\sqrt{2^np(1-p)}$ and $p=\alpha/(1+\alpha)$. This is approximated as
\[
I(x_{\mQ}:F) \simeq n - \left(n + \log\left( \frac{\alpha}{1+\alpha}\right)\right) = \log \left(1+ \frac{1}{\alpha}\right).
\]
For instance, suppose $P^*_{n,k}(\mQ)$ is an exponential probability function  then taking  $\alpha = 1/e$ gives an information value of 
\[
I(x_\mQ:F) \simeq \log(1+e) \simeq  1.89.
\]
For the complement $\mQ^c$, if we approximate $P^*_{n,k}(\mQ^c)= 1- c\alpha^k \simeq 1$ and
the conditional entropy (\ref{condEnty}) as
\[
\frac{\sum_{k\geq 2}P^*_{n,k}(\mQ^c)P_n(k)\log k}{\sum_{j\geq 1}P^*_{n,j}(\mQ^c)P_n(j)} \simeq \sum_{k\geq 2}P_n(k)\log k \approx \log (2^{n-1}) = n-1,
\]
where $P_n(k)$ is the binomial probability distribution  with parameter $2^n$ and $1/2$, then
 the information value is approximated by
\[
I(x_{\mQ^c}:F) \simeq n - (n-1) = 1. 
\]
By taking $\alpha$ to be even smaller we obtain a property $\mQ$ which has a very
different information value compared to $\mQ^c$.

\section{Comparison}
\label{sec4a}
We now compare the information values and the efficiencies for the various inputs $x$ considered in the previous section.
In this comparison we also include the following simple property defined next:
let $G\in{\cal P}(\{0,1\}^n)$ be any class of functions and denote by the {\em identity property}
$\mM(G)$ of $G$  the `property which is satisfied only by $G$'.
We immediately have
\be
\label{caseI}
I(x_{\mM(G)}:F) = n-\log |G|
\ee
and
\[
\ell(x_{\mM(G)})=2^n - \log(1) = 2^n 
\]
since the cardinality $|\mM(G)| = 1$. The cost in this case is
\[
\cost_F(x_{\mM(G)}) = \frac{2^n}{n-\log|G|}.
\]
Note that  $x$ conveys that
 $t$ is in a specific class $G$  hence the entropy and information values
 are according to   Kolmogorov's  definitions (\ref{Hxy}) and (\ref{KIxy}).
The efficiency in this case is simple to compute using (\ref{effAlt}): we have
 $I^*(\ell(x)) = I^*(2^n)$ and the  sums in (\ref{istar}) vanish since $r(2^n)=1$ thus $I^*(2^n)=n$
and $\eta_F(x)= (n-\log|G|)/n$.

Let us first compare the information value and the efficiency of three subcases of this identity property
with the following three different class cardinalities:  $|G|= \sqrt{n}$, $|G|=n$ and $|G|=2^{n-\sqrt{n}}$.
Figure \ref{fig2a}  displays the information value and Figure \ref{fig2b} shows the efficiency for these subcases.
As seen
 the information value increases as the cardinality of $G$ decreases which follows from (\ref{caseI}).
The efficiency $\eff_F(x)$ for these three subcases may be obtained exactly and equals (according to the same
order as above) $1 - (\log n)/(2n)$, $1 - (\log n)/n$ and $1/\sqrt{n}$. Thus 
a property with a single element $G$ may have an efficiency
which  increases or decreases depending on the rate of growth of the cardinality of $G$ with respect to $n$.

Let us  compare  the  efficiency for inputs $x_{\mL_d}$, $x_{{\mV}^c_d}$, 
$x_{{\mV}_d}$ and $x_{{\mV}_d(S_m)}$. As an example, suppose that  the VC-dimension parameter $d$ grows as $d(n) = \sqrt{n}$.
As can be seen from Figure \ref{fig3a},  property ${\mV}_d$  is the most efficient of the three  staying above the  $80\%$ level.
Letting the sample size increase at the rate of $m(n)=n^a$ 
then from Figure \ref{fig3b} the efficiency of   $\mV_d(S_m)$ increases with  respect to $a$
 but remains smaller
than the efficiency of property $\mV_d$.
Letting
the VC-dimension increase as $d(n)=n^b$ then Figure \ref{Fig4} displays
 the  efficiency of  $\mV_d(S_m)$ as a function of   $b$
for several values of $a=0.1, 0.2, \ldots, 0.4$ where  $n$ is fixed at $10$.
 As seen, the efficiency increases approximately linearly with $a$
and non-linearly with respect to $b$ with a saturation  at approximately $b=0.2$.

\section{Conclusions}

The information width introduced here is  a fundamental 
 concept based on which  a combinatorial
interpretation  of information 
is defined and used as the basis for the concept of  efficiency of  information.
We defined the width  from two perspectives, that of the provider and the acquirer of information
and used it as a reference point according to which the efficiency of 
any   input information can be evaluated.
As an application we considered the space of binary function classes on a finite domain
and computed the efficiency of information conveyed by various class properties.
The main point that arises from these results is that
  side-information of different types can be quantified, computed and compared in this common
 framework
 which is  more general
 than the standard framework used in 
 the theory of information transmission.

As further work, it will  be interesting to compute the efficiency of 
information  in other   applications,
for instance, pertaining to properties of 
classes of Boolean functions $f:\{0,1\}^n \rightarrow\{0, 1\}$
(for which there are many applications, see for instance \cite{hammerYves}).
It will  be interesting to examine standard search algorithms, for instance, those used in machine learning
over a finite search space (or hypothesis space) and compute their information efficiency, i.e.,
accounting for all side information available for an algorithm (including data) and computing for it
the acquired information value and efficiency.

In our treatment of this subject
we  did not touch the issue  of how the information is used.
For instance, a learning algorithm uses side-information
and  training data  to  learn  a pattern classifier which has
minimal prediction (generalization)  error.
A search algorithm in the area of information-retrieval uses an input query
to return an answer set that overlaps as many of the relevant objects
and at the same time has as few  non-relevant objects as possible.
In each such application the information acquirer, e.g., an algorithm,
 has an associated performance criterion, e.g.,
 prediction error, percentage recall  or precision,  according to  which it is evaluated.
What is 
 the relationship
between information  and performance, does performance depend on efficiency
 or only on the amount of provided information ?
 what are the consequences  of using  input information of low efficiency ?
For the current work, we leave these questions as open.
The remaining parts of the paper consist of the technical work used to obtain the previous results.


\section{Technical results}

In this section we provide the proofs of Theorems \ref{p1} to \ref{cor2}.
Our approach is to estimate the number of sets $G\subseteq F$ 
that satisfy  a property $\mM$.
Using the techniques from \cite{YcartRatsaby06_1} we employ a probabilistic method by which a random class is generated
and the probability that it satisfies $\mM$ is computed. 
As we use the uniform probability distribution on elements of the power set $\mP(F)$ then
 probabilities yield  cardinalities of the corresponding sets. 
The computation of $\omega_{x}(k)$ and hence of (\ref{newIxy})
follows directly.
It is worth noting 
 that, as in \cite{Kolmogorov63},  
the notion of probability
is only used  here for simplifying some of the counting arguments
and thus, unlike Shannon's information, it plays no role 
in the actual definition of  information.

Before proceeding with the proofs, in the next section we describe the probability model for generating a random class.

\subsection{Random class generation}

In this subsection we describe  the underlying probabilistic processes 
with which a random class is generated. 
We use the so-called {\em binomial model} to generate a random class of binary functions (this
has been extensively used  in the area of random graphs  \cite{Jansonetal00}).
In this model,  the random class
  $\mF$ is constructed through
   $2^n$ independent  coin tossings, one for each function in  $F$,
   with a probability of success (i.e., selecting a function into $\mF$) equal to $p$.
The probability distribution  $P_{n,p}$ is  formally defined on $\mP(F)$  as follows:
given parameters $n$ and $0\leq p\leq 1$, for any $G\in{\cal P}(F)$,
\[
P_{n,p}(\mF=G) =
p^{|G|}(1-p)^{2^n-|G|}.
\]
 In our application, we choose $p=1/2$ and denote the probability distribution as
 \[
 P_n\equiv P_{n,\frac{1}{2}}.
 \] 
It is clear that
 for any element $G\in \mP(F)$,  the probability that the random class $\mF$ equals $G$ is 
 \be
 \label{unif1}
  \alpha_n \equiv P_n(\mF = G) =  \left(\frac{1}{2}\right)^{2^n}
\ee
and the probability of $\mF$ having a cardinality $k$ is
\be
\label{pfk}
P_n(|\mF|=k) = {2^n\choose k}\alpha_n, \;\; 1\leq k\leq 2^n.
\ee
The following fact easily follows from the definition of the conditional probability:
for any set  $B\subseteq{\cal P}(F)$,
\be
\label{cond4}
P_{n}(\mF\in B|\;|\mF|=k) = \frac{\sum_{G\in B}\alpha_n}{{2^n\choose k}\alpha_n}=\frac{|B|}{{2^n\choose k}}.
\ee
Denote by 
\[
F^{(k)} = \{G\in\mP(F): |G|=k\}, 
\]
the collection  of binary-function classes of cardinality $k$,
$1 \leq k\leq 2^n$.
Consider the  uniform probability distribution on  $F^{(k)}$
which is defined as follows:
given parameters $n$ and $1\leq k\leq 2^n$ then for any $G\in{\cal P}(F)$,
\begin{equation}
\label{uniformclass}
P^*_{n,k}(G) =\frac{1}{\binom{2^n}{k}}, \mbox{ if } G\in{F}^{(k)}, 
\end{equation}
and $P^*_{n,k}(G)=0$ otherwise.
Hence from (\ref{cond4}) and (\ref{uniformclass}) it follows that for any $B\subseteq\mP(F)$,
\be
\label{equiv}
P_n(\mF\in B|\;|\mF|=k) = P^*_{n,k}(\mF\in B).
\ee
It will be convenient to use  another probability distribution which estimates $P^*_{n,k}$ and is defined
as follows.
Construct a random $n\times k$ binary matrix 
by  fair-coin tossings with the $nk$ elements taking 
values  $0$ or $1$ independently with probability $1/2$.
Denoting  by $Q^*_{n,k}$ the probability measure 
corresponding to this process then
for any matrix  $U\in{\cal U}_{n\times
k}(\{0,1\})$, 
\[
Q^*_{n,k}(U) = \frac{1}{2^{nk}}.
\]
Clearly, the columns of a binary matrix $U$ are vectors of length $n$ which
are  binary functions on $[n]$. Hence the set $\text{cols}(U)$ of columns of 
$U$  represents a class of binary functions. It contains
$k$ elements if and only if  $\text{cols}(U)$
consists of distinct elements,
or less than $k$ elements if two columns are the same. Denote
 by $S$ a {\em simple} binary matrix  as one all of whose  columns are distinct (\cite{AnsteeFlemingFurediSali05}).
 We claim that
the conditional distribution of the set
 of columns of a random binary
matrix,  knowing that the matrix is simple, is the uniform
probability distribution $P^*_{n,k}$. To see this,  observe that the
probability that the columns of a random binary matrix are
distinct is
\begin{equation}
\label{probaS}
Q^*_{n,k}(S) = \frac{2^n(2^n-1)\cdots(2^n-k+1)}{2^{nk}}.
\end{equation}
For any fixed class $G\in\mP(F)$ of $k$ binary functions 
 there are $k!$ corresponding simple matrices in ${\cal U}_{n\times k}(\{0,1\})$.
 Therefore given a simple matrix $S$, the probability that  $\text{cols}(S)$ 
 equals a class $G$ 
is
\begin{equation}
\label{Ps}
Q^*_{n,k}(G|S) = 
\frac{k!}{2^{nk}}\frac{2^{nk}}{2^n(2^n-1)\cdots(2^n-k+1)}
=\frac{1}{\binom{2^n}{k}} = P^*_{n,k}(G).
\end{equation}
Using the distribution
$Q^*_{n,k}$ enables
simpler computations of  the asymptotic probability of  several
types of events that are associated with the properties of Theorems \ref{p1} -- \ref{cor2}.
We  henceforth resort to the following process  for generating a random class $G$:
for every $1\leq k\leq 2^n$ we
repeatedly and  independently draw  matrices of size $n\times k$ using $Q^*_{n,k}$
 until we get a simple matrix $M_{n\times k}$.
 Then we randomly draw a value for $k$ according to the distribution of (\ref{pfk})
and choose the formerly generated simple matrix corresponding to this chosen $k$.
 Since this is a simple matrix then by (\ref{Ps}) it is  clear that  this choice
yields
a random class $G$
which is distributed uniformly in $F^{(k)}$ according to $P^*_{n,k}$.
This is stated formally in Lemma \ref{lem1} below but first we have an auxiliary lemma
that shows the above process  converges.
\begin{lemma}
\label{lemQS}
Let $n=1,2,\ldots$ and consider the process of 
 drawing sequences $s^{(k)}_m = \{M^{(i)}_{k,n}\}_{i=1}^m$,  $1\leq k \leq 2^n$,
all of length $m$
 where the $k^{th}$ sequence consists of  matrices $M^{(i)}_{k,n}\in \mathcal{U}_{n\times k}(\{0,1\})$ 
 which are randomly and independently drawn 
according to the probability distribution $Q^*_{n,k}$. 
Then the probability that after $m=ne^{2^{n}}$ trials there exists a $k$
 such that no simple matrix
appears in  $s^{(k)}_m$, converges to zero  with increasing $n$.
\end{lemma}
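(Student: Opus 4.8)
The plan is to control the failure event by a union bound over the $2^n$ values of $k$ and then to show that each per-$k$ failure probability is small enough that the sum still tends to zero. First I would fix $k$ and observe that, since the matrices $M^{(i)}_{k,n}$ making up $s^{(k)}_m$ are drawn independently from $Q^*_{n,k}$, the probability that none of the $m$ draws is simple equals $(1-Q^*_{n,k}(S))^m$, where $Q^*_{n,k}(S)$ is the single-draw simplicity probability recorded in (\ref{probaS}). A union bound over $k$ then yields
\[
\prob\left(\exists k:\ s^{(k)}_m \text{ contains no simple matrix}\right)
\ \leq\ \sum_{k=1}^{2^n}\left(1-Q^*_{n,k}(S)\right)^m .
\]

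The key step is a lower bound on $Q^*_{n,k}(S)$ that is uniform in $k$. Rewriting (\ref{probaS}) as $Q^*_{n,k}(S)=\prod_{j=0}^{k-1}\left(1-j/2^n\right)$ shows it is non-increasing in $k$, so its minimum over $1\leq k\leq 2^n$ is attained at $k=2^n$, where it equals $(2^n)!/(2^n)^{2^n}$. Applying the elementary inequality $N!\geq (N/e)^N$ with $N=2^n$ gives
\[
Q^*_{n,k}(S)\ \geq\ Q^*_{n,2^n}(S)=\frac{(2^n)!}{(2^n)^{2^n}}\ \geq\ e^{-2^n}
\]
for every $k$ in the range.

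Combining the two displays with the bound $1-t\leq e^{-t}$ gives
\[
\left(1-Q^*_{n,k}(S)\right)^m\ \leq\ \left(1-e^{-2^n}\right)^m\ \leq\ \exp\left(-m\,e^{-2^n}\right),
\]
and substituting the prescribed number of trials $m=ne^{2^n}$ collapses the exponent to $-n$, so each summand is at most $e^{-n}$. Summing the $2^n$ terms I then obtain
\[
\prob\left(\exists k:\ s^{(k)}_m \text{ contains no simple matrix}\right)\ \leq\ 2^n e^{-n}=(2/e)^n,
\]
which tends to $0$ as $n\to\infty$ since $2<e$.

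I expect the crux to be the matching of scales rather than any single calculation: the per-$k$ failure probability must beat the factor $2^n$ coming from the union bound, and it is precisely the extra factor $n$ in the choice $m=ne^{2^n}$ (rather than merely $e^{2^n}$, which would leave each term at roughly $e^{-1}$ and let the union bound blow up) that converts the per-term estimate into $e^{-n}$ and makes the geometric sum decay. The factorial lower bound and the union bound themselves are routine.
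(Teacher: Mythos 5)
Your proof is correct and follows essentially the same route as the paper: a union bound over $k$, a uniform-in-$k$ lower bound $Q^*_{n,k}(S)\geq e^{-2^n}$ attained at $k=2^n$, and the observation that $m=ne^{2^n}$ turns each summand into $e^{-n}$ so the total is $2^ne^{-n}\to 0$ (the paper's final bound $e^{n\ln 2 - m e^{-2^n}}$ is the same quantity). The only difference is cosmetic and in your favor: you get the key lower bound directly from the monotonicity of $Q^*_{n,k}(S)=\prod_{j<k}(1-j/2^n)$ together with $N!\geq (N/e)^N$, whereas the paper derives it by comparing $\sum_j \ln j$ with an integral and checking monotonicity of the resulting bound $b(n,k)$ via its derivative.
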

\begin{proof}
Let 
\(
a({n,k}) = Q^*_{n,k}(S)
\)
be the probability of getting a simple matrix $M_{n,k}\in\mathcal{U}_{n\times k}$.
Then the probability that there exists some $k$ such that $s^{(k)}_m$ consists of only non-simple matrices
is 
\be
\label{rightsum}
q(n,m) \leq \sum_{k=1}^{2^n}(1-a({n,k}))^m \leq \sum_{k=1}^{2^n} e^{-ma(n,k)}.
\ee
From (\ref{probaS})
 we have
 \be
 \ln a(n,k)  = \sum_{i=0}^{k-1}\ln(2^n - i) -nk\ln 2 = \sum_{j=2^n - (k-1)}^{2^n}\ln j -nk\ln 2.
\ee
Since $\ln x$ is increasing function of $x$ then for any pair of positive integers $2\leq a \leq b$ we have
\[
\sum_{j=a}^b \ln j \geq \int_{a-1}^b \ln x dx = b(\ln b -1) - (a-1)(\ln(a-1) - 1).
\]
Hence 
\[
\ln a(n,k) \geq 2^n(n\ln 2 - 1) - (2^n - k)(\ln(2^n - k) - 1) - nk\ln 2 \equiv b(n,k)
\]
and the right side of (\ref{rightsum}) is now bounded as follows
\be
\label{abb}
\sum_{k=1}^{2^n}  e^{-ma(n,k)}    \leq \sum_{k=1}^{2^n}e^{-m e^{b(n,k)}}.
\ee
From a simple check of the  derivative of $b(n,k)$ with respect to $k$
it follows that  $b(n,k)$  is a decreasing function of $k$ on  $1\leq k\leq 2^n$.
Replacing each term in the sum on the right side of (\ref{abb}) by the last term gives
the following bound
\be
q(n,m) \leq 
 e^{n\ln 2 - m  e^{-2^n} }.
\ee
The exponent is negative  provided
\[
m > n\ln(2) e^{2^n}.
\]
Choosing $m=ne^{2^n}$ guarantees that $q(n,m) \rightarrow 0$ with increasing $n$.
\end{proof}

The  following result  states that the measure $Q^*_{n,k}$ may replace $P^*_{n,k}$ uniformly over $1\leq k\leq 2^n$.
\begin{lemma}
\label{lem1}
Let $B\subseteq{\cal P}(F)$. Then 
\[
\max_{1\leq k\leq 2^n}|P^*_{n,k}(B) - Q^*_{n,k}(B)| \rightarrow 0
\]
as $n$ tends to infinity.
\end{lemma}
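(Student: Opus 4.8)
The plan is to route everything through the exact identity (\ref{Ps}) together with the uniform control on the rejection step provided by Lemma \ref{lemQS}. Recall that the class law $Q^*_{n,k}$ is realized by the generation process described just above the lemma: for fixed $k$ one draws matrices independently from $Q^*_{n,k}$ until a simple matrix $M_{n\times k}$ appears, within the budget of $m=ne^{2^n}$ trials, and returns the class $\text{cols}(M_{n\times k})$. The reason the two measures are not literally equal is the birthday-type phenomenon: when $k$ is much larger than $2^{n/2}$ a single random matrix is almost never simple, so the raw pushforward of $Q^*_{n,k}$ under $U\mapsto\text{cols}(U)$ lives on classes of size strictly below $k$ and cannot be identified with $P^*_{n,k}$. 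The rejection step is precisely what repairs this, and the whole discrepancy will be localized on the event that the budget is exhausted.

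First I would write the sampler's output law as a mixture. Set $a(n,k)=Q^*_{n,k}(S)$, the probability that one matrix is simple, and let $f(n,k)=(1-a(n,k))^m$ be the probability that none of the $m$ independent draws is simple. Conditioned on at least one simple matrix occurring, the first such matrix is uniform over simple matrices, so by (\ref{Ps}) the returned class is distributed exactly as $P^*_{n,k}$. Hence for every $B\subseteq\mP(F)$ the law takes the form $Q^*_{n,k}(B)=(1-f(n,k))P^*_{n,k}(B)+f(n,k)D_{n,k}(B)$ for some residual law $D_{n,k}$ supported on the failure event. Subtracting $P^*_{n,k}(B)$ gives
\[
|Q^*_{n,k}(B)-P^*_{n,k}(B)| = f(n,k)\,|D_{n,k}(B)-P^*_{n,k}(B)| \le f(n,k),
\]
since both probabilities lie in $[0,1]$. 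This reduces the lemma to a bound on $\max_{1\le k\le 2^n} f(n,k)$ that does not depend on $B$.

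The second step is to control the failure probability uniformly in $k$, and here I would simply quote Lemma \ref{lemQS}. Its conclusion is that, with $m=ne^{2^n}$, the probability $q(n,m)$ that \emph{some} index $k$ fails to produce a simple matrix tends to $0$. Since the event that a fixed index $k$ fails is contained in that union event, $f(n,k)\le q(n,m)$ for every $k$, whence $\max_{1\le k\le 2^n} f(n,k)\le q(n,m)\to 0$ (one could instead use that $a(n,k)=\prod_{i=0}^{k-1}(1-i/2^n)$ is nonincreasing in $k$, so the worst case is $k=2^n$). Combining with the displayed estimate yields $\max_{1\le k\le 2^n}|P^*_{n,k}(B)-Q^*_{n,k}(B)|\le q(n,m)\to 0$, in fact uniformly over all $B$, which is the claim. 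The main obstacle is conceptual rather than computational: one must read $Q^*_{n,k}$ through the budgeted rejection sampler, since the raw single-matrix pushforward would make the statement false for $k\gg 2^{n/2}$, and recognize that the exactness in (\ref{Ps}) pins the entire discrepancy onto the single failure event already controlled by Lemma \ref{lemQS}; once this is seen, the estimates are immediate.
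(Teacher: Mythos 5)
Your proof is correct, but it takes a genuinely different route from the paper's, and the difference is substantive. The paper's proof writes $P^*_{n,k}(B)=Q^*_{n,k}(B\cap S)/Q^*_{n,k}(S)$ and expands $\max_k|P^*_{n,k}(B)-Q^*_{n,k}(B)|$ into terms controlled by $\max_k|1-Q^*_{n,k}(S)|$ and $\max_k|1/Q^*_{n,k}(S)|$, then asserts that Lemma \ref{lemQS} gives $\max_k|1-Q^*_{n,k}(S)|\to 0$ and $\max_k|1/Q^*_{n,k}(S)|\to 1$ uniformly over $1\leq k\leq 2^n$. That is, the paper's argument rests on the \emph{single-draw} simplicity probability tending to $1$ uniformly in $k$ --- which is not what Lemma \ref{lemQS} says (that lemma only controls the budgeted repetition process with $m=ne^{2^n}$ trials), and which is in fact false by exactly the birthday phenomenon you identify: $Q^*_{n,k}(S)=\prod_{i=0}^{k-1}(1-i2^{-n})$ is bounded away from $1$ for $k$ of order $2^{n/2}$ and satisfies $Q^*_{n,2^n}(S)=(2^n)!/2^{n2^n}\approx e^{-2^n}$. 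Under the raw single-matrix pushforward reading, the lemma's conclusion itself fails (e.g.\ $B=\{F\}$, $k=2^n$ gives $P^*_{n,k}(B)=1$ while $Q^*_{n,k}(B)\leq Q^*_{n,k}(S)\to 0$), so your insistence on reading $Q^*_{n,k}$ through the budgeted rejection sampler is not merely an interpretive choice --- it is the reading under which the statement is true, and it matches the generation process the paper describes immediately before the lemma.

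Given that reading, your argument is clean and complete: conditioned on success the first simple matrix is uniform over simple matrices, so by (\ref{Ps}) the output is exactly $P^*_{n,k}$-distributed, the mixture identity $Q^*_{n,k}(B)=(1-f(n,k))P^*_{n,k}(B)+f(n,k)D_{n,k}(B)$ pins the whole discrepancy on the failure event, and $\max_k f(n,k)\leq q(n,ne^{2^n})\to 0$ follows from Lemma \ref{lemQS} since each single-$k$ failure event is contained in the union event (your alternative, using that $a(n,k)$ is nonincreasing in $k$ so the worst case is $k=2^n$, is consistent with the paper's own monotonicity of $b(n,k)$). Comparing the two: the paper's conditioning argument, where valid, needs no sampling budget and would give quantitative closeness, but it only works in the regime $k=o(2^{n/2})$ where $Q^*_{n,k}(S)\to 1$; your mixture decomposition works uniformly over all $1\leq k\leq 2^n$ and all $B$ simultaneously, at the price of tying $Q^*_{n,k}$ to the sampler. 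In short, your proof is not just a different route --- it repairs the step in the paper's proof where Lemma \ref{lemQS} is misapplied.
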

\begin{proof}
From (\ref{Ps}) we have
\[
P^*_{n,k}(B)=
Q^*_{n,k}(B|S)=
\frac{Q^*_{n,k}(B\bigcap S)}
{Q^*_{n,k}(S)}.
\]
Then
\bqq
\lefteqn{\max_{k}|P^*_{n,k}(B) - Q^*_{n,k}(B)| =
 \max_{k}\left|\frac{Q^*_{n,k}(B\bigcap S)} {Q^*_{n,k}(S)}  - Q^*_{n,k}(B)  \right|} \nonumber\\
& \leq & \max_{k}\left|\frac{1}{Q^*_{n,k}(S)}\right| \max_{k} \left| Q^*_{n,k}(B\bigcap S) - Q^*_{n,k}(B) Q^*_{n,k}(S) \right|  \nonumber\\
&\leq & \max_{k}\left|\frac{1}{Q^*_{n,k}(S)}\right| 
 \Biggl(
     \max_{k}\left|Q^*_{n,k}(B\bigcap S) - Q^*_{n,k}(B)  \right| + 
     \max_k \left| Q^*_{n,k}(B) (1-Q^*_{n,k}(S))  \right|
 \Biggr) \nonumber\\
\label{coldplay}
&\leq & \max_{k}\left|\frac{1}{Q^*_{n,k}(S)}\right| 
 \Biggl(
     \max_{k}\left|Q^*_{n,k}(B\bigcap S) - Q^*_{n,k}(B)  \right| + 
     \max_k \left| Q^*_{n,k}(B)\right|\max_k \left|1-Q^*_{n,k}(S)\right|
 \Biggr).\quad\quad
\eqq
From Lemma \ref{lemQS} it follows that
\be
\label{fol}
\max_k \left|1-Q^*_{n,k}(S)\right|\rightarrow 0, \qquad
\max_k|1/Q^*_{n,k}(S)|\rightarrow 1
\ee
 with increasing $n$.
For any $1\leq k\leq 2^n$,
\[
Q^*_{n,k}(B)+Q^*_{n,k}(S)-1
\leq Q^*_{n,k}(B\bigcap S)
\leq
Q^*_{n,k}(B)
\]
and
by Lemma \ref{lemQS}, $Q^*_{n,k}(S)$ tends to $1$ uniformly over $1\leq k\leq 2^n$ with increasing $n$. Hence
$\max_{k}\left|Q^*_{n,k}(B\bigcap S) - Q^*_{n,k}(B)  \right| \rightarrow 0$ which together with (\ref{coldplay}) and (\ref{fol})
implies the statement of the lemma.
\end{proof}
We now proceed to the proofs of the theorems in  Section \ref{sec4}.
\subsection{Proofs}
\label{proofs}

Note that for any property $\mM$, the quantity  $\omega_{x}(k)$ in (\ref{newIxy}) is the ratio of the number of classes $G\in F^{(k)}$ that satisfy $\mM$ to the total number
of classes that satisfy $\mM$. It is therefore equal to $P_n(|\mF|=k\,|\,\mF\models\mM)$.
Our approach starts by computing the probability $P_{n}(\mF\models\mM\,|\, |\mF|=k)$ from which 
 $P_n(|\mF|=k\,|\,\mF\models\mM)$ and then $\omega_{x}(k)$ are obtained.

\subsubsection{Proof of Theorem \ref{p1}}
\label{prfp1}
We start with an  auxiliary  lemma which states that  the probability $P_{n}(\mF\models\mL_d\,|\, |\mF|=k)$ possesses a zero-one behavior.
\begin{lemma}
\label{lem2}
Let $\mF$ be a class of cardinality $k_n$ and randomly drawn according to the uniform probability distribution $P^*_{n,k_n}$ on $F^{(k_n)}$.
Then as $n$ increases,  the probability $P^*_{n,k_n}(\mF\models\mL_d)$ that $\mF$
 satisfies property $\mL_d$  tends to $0$ or $1$ if
  $k_n\gg\log(2n/d)$ or
 $k_n = 1+\kappa_n$, $\kappa_n\ll(\log(n))/d$, respectively.
\end{lemma}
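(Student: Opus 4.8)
The plan is to reduce the computation of $P^*_{n,k_n}(\mF\models\mL_d)$ to a binomial tail estimate in the matrix model. First I would apply Lemma~\ref{lem1} with $B=\{G\in\mP(F):\L(G)\geq d\}$ to replace $P^*_{n,k_n}$ by $Q^*_{n,k_n}$, incurring only an $o(1)$ error uniformly in $k_n$; it then suffices to work under the product measure $Q^*_{n,k_n}$, i.e.\ with a random $n\times k_n$ matrix $M$ whose entries are independent fair bits and whose columns form the class $\mF$. The structural observation that drives everything is that $\L(\mF)$ equals the number of \emph{constant rows} of $M$: for a point $i\in[n]$ one has $|\tr_{\mF}(\{i\})|=1$ exactly when all entries of row $i$ coincide, this being insensitive to repeated columns, so the largest $E$ with $|\tr_{\mF}(E)|=1$ is precisely the set of constant rows. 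Consequently $\mF\models\mL_d$ if and only if $N\geq d$, where $N$ is the number of constant rows. Since each row is constant with probability $2\cdot2^{-k_n}=2^{1-k_n}$ and the rows are independent, $N$ is binomial with parameters $n$ and $2^{1-k_n}$, of mean $\mu=n\,2^{1-k_n}=2n/2^{k_n}$.

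With this reduction the two regimes follow from a first-moment bound and a concentration inequality respectively. For the zero case, the hypothesis $k_n\gg\log(2n/d)$ gives $2^{k_n}/(2n/d)\to\infty$, hence $\mu/d=(2n/d)/2^{k_n}\to 0$, and Markov's inequality yields $\PP(N\geq d)\leq\mu/d\to0$. For the one case, write $k_n=1+\kappa_n$ so that $\mu=n\,2^{-\kappa_n}$; since $d\geq1$ the hypothesis $\kappa_n\ll\log(n)/d$ forces $\kappa_n=o(\log n)$, whence $\mu=n^{1-o(1)}\to\infty$. A Chernoff lower-tail bound for the binomial then gives $\PP(N<d)\leq\PP(N\leq\mu/2)\leq e^{-\mu/8}\to0$ as soon as $d\leq\mu/2$, so $\PP(N\geq d)\to1$.

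I expect the one remaining point of care to be the inequality $d\leq\mu/2$ in the second regime, which must hold across the whole admissible range of $d=d_n$. Here the dichotomy is clean: whenever $\kappa_n\geq1$ the hypothesis $\kappa_n\ll\log(n)/d$ forces $\log(n)/d\to\infty$, i.e.\ $d\ll\log n$, so $d=o(\mu)$ and the Chernoff bound applies; and whenever $\kappa_n=0$ we have $k_n=1$, the single column makes every row trivially constant, so $N=n$ deterministically and $P^*_{n,k_n}(\mF\models\mL_d)=1$ at once (using $d\leq n$). Thus the only genuine content of the proof lies in identifying $\L$ with the constant-row count and transferring to $Q^*_{n,k_n}$; the tail estimates are then routine.
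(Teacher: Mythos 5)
Your proof is correct, and it takes a genuinely different route from the paper's. Both arguments begin identically, transferring from $P^*_{n,k_n}$ to $Q^*_{n,k_n}$ via Lemma~\ref{lem1} (whose proof is uniform in the event $B$, so the $n$-dependence of $B=\{G:\L(G)\geq d\}$ is harmless), but from there the paper works with the events $E_S$, for $S\subseteq[n]$ with $|S|=d$, that all of $\mF$ agrees on $S$: the $0$-case is a union bound $\binom{n}{d}2^{-(k-1)d}$ over all $d$-sets, and the $1$-case uses the $\lfloor n/d\rfloor$ disjoint blocks $S_i$. You instead identify $\L(\mF)$ exactly as the number $N$ of constant rows of the random matrix, so that under $Q^*_{n,k_n}$ it has the law $\mathrm{Bin}(n,2^{1-k_n})$, and reduce both regimes to tail bounds on a single binomial variable. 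What each buys: the paper's union bound is the binomial tail estimate $\binom{n}{d}\left(2^{1-k}\right)^d$ in disguise and is quantitatively sharper than your Markov bound $\mu/d=(2n/d)2^{-k_n}$ (it decays with a $d$-th power), though both suffice for the stated $o(1)$ conclusion; conversely, your exact identification pays off in the $1$-case. There the paper asserts that, because the blocks $S_i$ are disjoint, the probability of $\bigcup_i E_{S_i}$ equals $\lfloor n/d\rfloor\, 2^{-(k-1)d}$ -- but disjointness of the index sets makes the events \emph{independent}, not disjoint, so the union has probability $1-(1-q)^m$ with $q=2^{-(k-1)d}$, $m=\lfloor n/d\rfloor$, and since $1-mq\leq(1-q)^m\leq e^{-mq}$ the paper's bound $\max\{0,1-mq\}$ on the failure probability is actually a lower bound rather than an upper bound; the conclusion survives because $mq\to\infty$ forces $e^{-mq}\to 0$, but your Chernoff estimate $\PP(N\leq\mu/2)\leq e^{-\mu/8}$ sidesteps the issue entirely. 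Your explicit verification that $d\leq\mu/2$ throughout the admissible range -- including the $\kappa_n=0$ boundary, where $k_n=1$ makes $N=n$ deterministically -- is also more careful than the paper, which leaves these points implicit.
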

\begin{proof}
For brevity, we sometimes write  $k$ for $k_n$.
Using Lemma \ref{lem1} it suffices to show that $Q^*_{n,k}(\mF\models\mL_d)$ tends to $1$ or $0$ under the stated conditions.
For any set $S\subseteq[n]$, $|S|=d$ and any fixed  $v\in\{0,1\}^d$, under the probability distribution $Q^*_{n,k}$,  the event $E_v$ that every function $f\in\mF$ satisfies
$f_{|S}=v$ has a probability $(1/2)^{kd}$.
Denote by $E_S$ the event that all functions in the random class $\mF$ have the same restriction on $S$. 
There are $2^d$ possible restrictions on $S$ and the events $E_v$, $v\in\{0,1\}^d$, are disjoint. Hence  $Q^*_{n,k}(E_S) = 2^d(1/2)^{kd}=2^{-(k-1)d}$.
The  event that $\mF$ has property $\mL_d$, i.e., that $L(\mF)\geq d$, equals the union of $E_S$, over all $S\subseteq[n]$ of cardinality $d$.
Thus we have
\begin{eqnarray*}
Q^*_{n,k}(\mF\models\mL_d)&=&Q^*_{n,k}\left(\bigcup_{S\subseteq[n]:|S|=d}E_{S}\right)\\
&\leq& {n\choose d}Q^*_{n,k}\left(E_{[d]}\right) = 2^{-(k-1)d}\frac{n^d(1-o(1))}{d!}.
\end{eqnarray*}
For $k=k_n\gg\log(2n/d)$ the right  side tends to zero which proves the first statement.
Let the mutually disjoint sets
$S_i=\{id+1,id+2,\ldots,d(i+1)\}\subseteq[n]$, $0\leq i\leq m-1$ where $m=\lfloor n/d\rfloor$. 
The event that $\mM_d$ is not true equals $\bigcap_{S:|S|=d}\overline{E}_S$. Its probability is
\bq
Q^*_{n,k}\left(\bigcap_{S:|S|=d}\overline{E}_S\right) & =& 1-Q^*_{n,k}\left(\bigcup_{S:|S|=d}{E}_S\right)
\leq  \max\{0, 1 - Q^*_{n,k}\left(\bigcup_{i=0}^{m-1}{E}_{S_i}\right)\}.
\eq
Since the sets are disjoint and of the same size $d$ then the  right hand side equals $\max\{0,1-mQ^*_{n,k}({E}_{[d]})\}$. This equals
\[
\max\{0, 1-\left\lfloor\frac{n}{d}\right\rfloor 2^{-(k-1)d}\}
\]
which tends to zero when $k=k_n = 1 + \kappa_n$, $\kappa_n\ll(\log(n))/d$. The second statement is proved.\end{proof}
%
\begin{remark}
While from this result it is clear that the critical value of  $k$ for the conditional probability $P_n(\mL_d|k)$ to tend to $1$
is $O(\log (n))$, as will be shown below,
when considering the conditional probability $P_n(k|\mL_d)$,  the most probable value of $k$ 
 is much higher at $O(2^{n-d})$.
\end{remark}

We continue now with  the  proof of Theorem \ref{p1}.
For any probability measure $\prob$ on $\mP(F)$  denote by 
$
\prob(k|\mL_d)=\prob(|\mF|=k\,|\,\mF\models\mL_d).
$
By the premise of Theorem \ref{p1}, the input $x$ describes  the target $t$ as an element of a class that satisfies property $\mL_d$.
  In this case the quantity $\omega_{x}(k)$ is the ratio of the number of classes of cardinality $k$ that satisfy $\mL_d$ to the total number
  of classes that satisfy $\mL_d$.
Since by (\ref{unif1}) the probability distribution $P_n$ is uniform over the space $\mP(F)$ whose size is $2^{2^n}$ then
\be
\label{omega2}
\omega_{x}(k) = \frac{P_n(k, \mL_d) 2^{2^n} }{P_n(\mL_d) 2^{2^n}}  = P_n(k|\mL_d).
\ee
We have
\[
P_{n}(k|\mL_d) =\frac{P_{n}(\mL_d|k)P_{n}(k)}{\sum_{j=1}^{2^n}P_{n}(\mL_d|j)P_{n}(j)}.
\]
By  (\ref{equiv}), it follows therefore that 
the sum  in (\ref{newIxy})  equals
\be
\label{ratio}
\sum_{k=2}^{2^n}\omega_{x}(k)\log(k) = 
\sum_{k=2}^{2^n}
\frac{P^*_{n,k}(\mL_d) P_n(k)}{\sum_{j=1}^{2^n}P^*_{n,j}(\mL_d) P_n(j)}\log k.
 \ee
Let $N=2^n$, then
by Lemma \ref{lem1} and from the proof of Lemma \ref{lem2}, as $n$ (hence $N$) increases, it follows that
\be
\label{pnk1}
P^*_{n,k}(\mL_d) \approx Q^*_{n,k}(\mL_d) = \left(\frac{1}{2}\right)^{d(k-1)} A(N,d)
\ee
where $A(N,d)$ satisfies
\[
 \frac{\log N}{d}\leq A(N,d) \leq \frac{\log^dN}{d!}.
\]
Let $p=1/(1+2^{d})$ then using (\ref{pnk1})
the ratio in (\ref{ratio}) is 
\be
\label{ap34}
\frac{\sum_{k=2}^N {N\choose k}p^k(1-p)^{N-k}\log k}{\sum_{j=1}^N {N\choose j}p^j(1-p)^{N-j}}.
\ee
Substituting for $N$ and $p$, the denominator equals
\be
\label{denum3}
1-(1-p)^N  =1 - \left(1 - \frac{1}{1+2^d}\right)^{2^n} = 1 - \left(\frac{2^d}{1+2^d}\right)^{2^n}.
\ee 
Using the DeMoivre-Laplace limit theorem \cite{Feller1}, the binomial distribution 
$P_{N,p}(k)$ with parameters $N$ and $p$ satisfies
\[
P_{N,p}(k)\approx\frac{1}{\sigma}\phi\left(\frac{k-\mu}{\sigma}\right), \;\; N\rightarrow\infty
\]
where
$\phi(x) = (1/\sqrt{2\pi})\exp(-x^2/2)$ is the standard normal probability density function and
 $\mu=Np$,  $\sigma = \sqrt{Np(1-p)}$.
The sum in the numerator of (\ref{ap34}) may be approximated by an integral
\[
\frac{1}{\sigma}\int_2^\infty \phi\left(\frac{x-\mu}{\sigma}\right)\log x\, dx
 = \int_{(2-\mu)/\sigma}^\infty \phi(x)\log (\sigma x +\mu) dx.
\]
The $\log$ factor equals  $\log \mu$ + $\log(1+ x\sigma/\mu) =\log \mu + x\sigma/\mu + O(x^2(\sigma/\mu)^2)$.
Denote by 
\(
a=(2-\mu)/\sigma
\)
 then the right  side above equals
\[
\Phi(-a)\log \mu + \frac{\sigma}{\mu}\phi(a) + O\left(\left(\frac{\sigma}{\mu}\right)^2\right)
\]
where $\Phi(x)$ is the normal cumulative probability  distribution.
Substituting for $\mu$, $\sigma$,  $p$ and $N$, and combining with 
 (\ref{denum3}) then (\ref{ratio}) is asymptotically equal to 
\be
\label{phia}
\sum_{k=2}^{2^n}\omega_{x}(k)\log k \approx
\frac{
 \Phi\left(-a\right)\log\left(\frac{2^n}{1+2^{d}}\right) + 
2^{-(n-d)/2}\phi(a) + O(2^{-(n-d)})}
{1 - \left(\frac{2^d}{1+2^d}\right)^{2^n}}
\ee
where
\[
a =  2(1+2^{d})2^{-(n+d)/2} - 2^{(n-d)/2}.
\]
In Theorem \ref{p1}, the set $\bY$    is the class $F$ (see (\ref{newIxy})) hence $\log|F| = n$ and
\bq
I(x:F) &=& n -  \sum_{k\geq 2}\omega_{x}(k)\log k.
\eq
Combining with  (\ref{phia})
the first statement of Theorem \ref{p1} follows.

We now compute the description complexity $\ell(x_{\mL_d})$.
Since  in this setting $\bY$ is $F$ and $\bZ$ is $\mP(F)$ then,
by (\ref{lx1}), 
the description complexity $\ell(x_{\mL_d})$ is $2^n - \log|\mL_d|$.
Since , the probability distribution $P_n$ is uniform on $\mP(F)$
then the cardinality of $\mL_d$ equals
\[
|\mL_d| = 2^{2^n}P_n(\mL_d).
\]
It follows that 
\[
\ell(x_{\mL_d}) = -\log P_n(\mL_d)
\]
hence it suffices to compute  $P_n(\mL_d)$.
Letting $N=2^n$, we have 
\[
P_n(\mL_d) = \sum_{k=1}^NP_n(\mL_d|k)P_N(k) = \sum_{k=1}^NP^*_{n,k}(\mL_d)P_N(k).
\]
Using (\ref{pnk1}) and
letting $p=1/(1+2^d)$ this becomes
\bq
\lefteqn{\sum_{k=1}^N\left(\frac{1}{2}\right)^{d(k-1)}A(N,d)\left(\frac{1}{2}\right)^{N}{N\choose k}}\\
&=&
\left(\frac{1}{1-p}\right)^N\left(\frac{1}{2}\right)^{N-d}A(N,d)(1-(1-p)^N).
\eq
Letting $q = (1-p)^N$, it follows that
\bq
-\log P_n(\mL_d) &=& \log\left(\frac{2^{N-d}}{A(N,d)}\right) + \log\left(\frac{q}{1-q}\right)\\
& = & N-d -\log A(N,d) + q + O(q^2)+ \log(q)\\
& =&  N(1-p-O(p^2))-d - c\log\log N + o(1)
\eq
where $1\leq c \leq d$.
Substituting for $N$ gives the result.
 \hfill\bqed

\subsubsection{Proof of Theorem \ref{p2}}
We start with an auxiliary lemma 
 that states
a  threshold value for the cardinality of a random element of $F^{(k)}$ that satisfies property ${\mV}^c_d$.
\begin{lemma}
\label{lem3}
For any integer $d>0$ let $k$ be an integer satisfying $k\geq 2^d$. 
Let $\mF$ be a class of cardinality $k$ and randomly drawn according to the uniform probability distribution $P^*_{n,k}$ on $F^{(k)}$.
Then
\[
\lim_{n\rightarrow\infty} P^*_{n,k}( {\mV}^c_d ) =1.
\]
\end{lemma}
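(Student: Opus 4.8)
The plan is to exploit the disjoint-block independence structure already used in the second half of the proof of Lemma \ref{lem2}, combined with the substitution of $Q^*_{n,k}$ for $P^*_{n,k}$ granted by Lemma \ref{lem1}. First I would observe that by Lemma \ref{lem1} it suffices to show $Q^*_{n,k}(\mF\models\mV^c_d)\to 1$, since $|Q^*_{n,k}(\mV^c_d)-P^*_{n,k}(\mV^c_d)|\to 0$ uniformly in $k$. Under $Q^*_{n,k}$ the class $\mF$ is the set of columns of a random $n\times k$ binary matrix whose entries are independent fair coins, so the restriction of the $k$ columns to any fixed set $E\subseteq[n]$ with $|E|=d$ is a family of $k$ independent uniform vectors in $\{0,1\}^d$.

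Next I would partition (most of) $[n]$ into $m=\lfloor n/d\rfloor$ pairwise disjoint sets $E_0,\ldots,E_{m-1}$, each of cardinality $d$, and introduce the event $R_i$ that $E_i$ is \emph{shattered}, i.e. that all $2^d$ patterns of $\{0,1\}^d$ appear among the restrictions of the columns to $E_i$. Because property $\mV^c_d$ asserts only that \emph{some} $d$-set is shattered, one has the inclusion
\[
\{\mF\not\models\mV^c_d\}=\{\VC(\mF)<d\}\subseteq\bigcap_{i=0}^{m-1}\overline{R_i},
\]
whence $Q^*_{n,k}(\VC(\mF)<d)\le Q^*_{n,k}\bigl(\bigcap_i\overline{R_i}\bigr)$. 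Since the $E_i$ are disjoint and the matrix entries are independent, the events $R_0,\ldots,R_{m-1}$ depend on disjoint row-blocks and are therefore mutually independent and identically distributed, giving
\[
Q^*_{n,k}\Bigl(\bigcap_{i=0}^{m-1}\overline{R_i}\Bigr)=(1-p)^{m},\qquad p\equiv Q^*_{n,k}(R_0).
\]

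The crucial point is that $p$ is a fixed positive constant independent of $n$: it is exactly the coupon-collector probability that $k$ independent uniform draws from the $2^d$ patterns of $\{0,1\}^d$ cover all of them. The hypothesis $k\ge 2^d$ makes coverage possible, so $p>0$ (for $k=2^d$ one has $p=(2^d)!/(2^d)^{2^d}>0$, and $p$ only increases with $k$). Hence $(1-p)^m\to 0$ as $m=\lfloor n/d\rfloor\to\infty$, which forces $Q^*_{n,k}(\VC(\mF)<d)\to 0$ and thus $Q^*_{n,k}(\mF\models\mV^c_d)\to 1$; transferring back through Lemma \ref{lem1} yields the claim.

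The only delicate points, rather than genuine obstacles, are verifying that $p$ is bounded away from $0$ uniformly in $n$ (immediate, since $p$ does not depend on $n$) and confirming that working under $Q^*_{n,k}$, where columns may coincide, does no harm, as repeated columns can only shrink a trace and the $Q^*\!\to\!P^*$ comparison of Lemma \ref{lem1} holds for an arbitrary event. I expect the independence-across-disjoint-blocks step to carry essentially all the weight, exactly as in the proof of the second statement of Lemma \ref{lem2}.
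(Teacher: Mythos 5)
Your proof is correct and takes essentially the same route as the paper's: reduce to $Q^*_{n,k}$ via Lemma \ref{lem1}, partition $[n]$ into $\lfloor n/d\rfloor$ disjoint $d$-blocks, and exploit independence across blocks so the failure probability is at most $(1-p)^{\lfloor n/d\rfloor}\rightarrow 0$. The only cosmetic difference is that the paper first reduces to $k=2^d$ by monotonicity and uses the more restrictive per-block event that the submatrix equals the complete matrix $U_d$ (probability exactly $2^{-d2^d}$), whereas you keep general $k$ and use the shattering (coupon-collector) event with $p\geq (2^d)!/(2^d)^{2^d}$; both constants are positive and independent of $n$, so the arguments coincide.
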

\begin{remark}
\label{threshEd}
When $k_n< 2^d$ there does not exist an $E\subseteq[n]$ with $|\tr_E(\mF)|=2^d$ 
hence  $P^*_{n,k_n}({\mV}^c_d) = 0$. For $k_n\gg 2^d$,
$P^*_{n,k_n}({\mV}^c_d) $ tends to $1$.  Hence, for a random class $\mF$
 to have property ${\mV}^c_d$
 the critical value 
of its cardinality is $2^d$.
\end{remark}
We proceed now with the proof of  Lemma \ref{lem3}.
\begin{proof}
It suffices to prove the result for $k=2^d$ since
$P^*_{n,k}(\mF\models{\mV}^c_d)\geq P^*_{n,2^d}(\mF\models{\mV}^c_d)$. 
As in the proof of Lemma \ref{lem2}, we  represent $P^*_{n,2^d}$ by $Q^*_{n,2^d}$ using (\ref{Ps}) and
with 
 Lemma
\ref{lem1}  it suffices to show that $Q^*_{n,2^d}(\mF\models{\mV}^c_d)$
tends to $1$. Denote by $U_d$ the `complete' matrix with $d$ rows and $2^d$
columns formed by all $2^d$ binary vectors of
length $d$, ranked for instance in alphabetical order.
The event ``$\mF\models{\mV}^c_d$'' occurs if there exists
a subset $S=\{i_1,\ldots,i_d\}\subseteq [n]$ such that the submatrix
whose rows are indexed by $S$ and columns by $[2^d]$, is
equal to $U_d$. 
 Let
$S_i=\{id+1,id+2,\ldots,d(i+1)\}$, $0\leq i\leq m-1$, be the sets defined in the proof of Lemma \ref{lem2}
and consider the $m$ corresponding events which are defined as follows:
the $i^{th}$ event  is  described as
having a submatrix whose rows are indexed by $S_i$ 
and is equal to $U_d$.
Since the sets $S_i$, $1\leq i\leq m$ are disjoint it is clear that these events are
independent and have the same probability 
\[
Q^*_{n,k}(S_i)=2^{-d2^d}.
\] 
Hence the probability that at least one of them is
fulfilled is
\[
1-(1-2^{-d2^d})^{\lfloor n/d\rfloor}
\]
which tends to $1$ as $n$ increases.
\end{proof}
We continue with the proof of Theorem \ref{p2}.
As in the proof of Theorem \ref{p1}, 
since by (\ref{unif1})
 the probability distribution $P_n$ is uniform over $\mP(F)$  then 
\[
\omega_{x}(k) = \frac{P_n(k, {\mV}^c_d) 2^{2^n} }{P_n({\mV}^c_d) 2^{2^n}}  = P_n(k|{\mV}^c_d) =  \frac{P^*_{n,k}({\mV}^c_d)P_n(k)}{\sum_{j=0}^{2^n}P^*_{n,j}({\mV}^c_d)P_n(j)}, 1\leq k\leq 2^n.
\]
Considering  Remark \ref{threshEd}, in this case the sum in (\ref{newIxy}) is 
\be
\label{s14}
 \sum_{k=2^d}^{2^n}\frac{P^*_{n,k}({\mV}^c_d)P_n(k)\log k}{\sum_{j=2^d}^{2^n}P^*_{n,j}({\mV}^c_d)P_n(j)}.
\ee
We now obtain its asymptotic value as $n$ increases.
From the proof of  Lemma \ref{lem3}, 
it  follows that
 for all $k\geq 2^d$, 
 \[
 P^*_{n,k}({\mV}^c_d) \approx 1-(1-\beta)^{rk}, \;\beta = 2^{-d2^d}, r={\frac{n}{d2^d}}.
  \]
Since $\beta$ is an exponentially small positive real  we approximate $(1-\beta)^{rk}$ by $1-rk\beta$
(by assumption, $n< d2^{d}$ hence this remains positive for all $1\leq k\leq 2^n$).
Therefore we  take
 \be
 \label{bet}
 P^*_{n,k}({\mV}^c_d)\approx rk\beta
 \ee
 and (\ref{s14}) is approximated by
 \be
 \label{bphi}
\frac{\sum_{k=2^d}^{2^n}kP_n(k)\log k}{\sum_{j=2^d}^{2^n}jP_n(j)}.
 \ee
 As before, for simpler notation let us denote   $N=2^n$ and  let  $P_N(k)$ be the binomial distribution with parameters $N$ and $p=1/2$.
Denote by $\mu=N/2$ and $\sigma=\sqrt{N/4}$,
  then
 using the DeMoivre-Laplace limit theorem 
 we have 
 \[
 P_N(k) \approx \frac{1}{\sigma}\phi\left(\frac{k-\mu}{\sigma}\right), \; N\rightarrow\infty.
\]
Thus (\ref{bphi}) is approximated by the ratio of two integrals
\be
\label{alm}
\frac{\int_{(2^d-\mu)/\sigma}^\infty\phi(x)(\sigma x +\mu)\log(\sigma x+\mu)dx}{\int_{(2^d-\mu)/\sigma}^\infty\phi(x)(\sigma x +\mu)dx}.
\ee
The $\log$ factor equals  $\log \mu$ + $\log(1+ x\sigma/\mu) =\log \mu + x\sigma/\mu + O(x^2(\sigma/\mu)^2)$.
Denote by 
\be
\label{a}
a=(\mu-2^d)/\sigma
\ee
 then  the numerator is approximated by
\bq
\lefteqn{\Phi(a)\mu\log\mu + \sigma(1+\log\mu)\phi(a) + O\left(\frac{\sigma^3}{\mu^2}a^2\phi(a)\right)}\\
& =& \log(N/2)\left(\Phi(a)N/2 + \left(1+\frac{a^2}{N\log(N/2)}\right)\phi(a) \sqrt{N}/2\right).
\eq
Similarly,  the denominator of (\ref{alm}) is  approximated by
\(
\Phi(a)N/2 + \phi(a) \sqrt{N}/2.
\)
The ratio, and hence  (\ref{s14}),  tends to 
\[
\frac{\log(N/2)\left(\Phi(a)N/2 + \left(1+\frac{a^2}{N\log(N/2)}\right)\phi(a) \sqrt{N}/2\right)}
{\Phi(a)N/2 + \phi(a) \sqrt{N}/2}.
\]
Substituting back for $a$  then the above tends to
$\log(N/2) = \log N -1$. With
$N=2^n$ and (\ref{newIxy})
the first statement of the theorem follows.

We now compute the description complexity $\ell(x_{{\mV}^c_d})$. 
Following the steps of the second part of the proof of Theorem \ref{p1} (Section \ref{prfp1})
we have $\ell(x_{{\mV}^c_d})=-\log P_n({\mV}^c_d)$.
Using (\ref{bet})
the probability is approximated by
\[
P_n({\mV}^c_d) \approx r\beta\sum_{k=2^d}^{N} kP_N(k)
\]
and as before, this is approximated by $r\beta(\Phi(a)N + \phi(a)\sqrt{N})/2$.
Thus substituting for $r$ and $\beta$ we have 
\[
-\log P_n({\mV}^c_d) \approx  d(2^d +1) + \log(d) - \log\left(\Phi(a)N + \phi(a)\sqrt{N}\right) -\log\log N +1.
\]
Substituting for $N$
 yields the result.
 \hfill\bqed

\subsubsection{Proof of Theorem \ref{cor1}}
%
%

The proof is almost identical to that of Theorem \ref{p2}.
From (\ref{bet}) we have
 \[
  P^*_{n,k}({\mV}_d) = 
  1 - P^*_{n,k}({\mV}^c_d)
     =
     \left \{  \begin{array}{ll}
    1   &,  1\leq k < 2^d\\
    1- P^*_{n,k}(\mV_d^c) \approx 1 - r\beta k   &, 2^d \leq k\leq 2^n
\end{array}
\right.
  \]
hence 
\be
\label{vc==d}
 \sum_{k=2}^{2^n}\frac{P^*_{n,k}({\mV}_d)P_n(k)\log k}{\sum_{j=1}^{2^n}P^*_{n,j}({\mV}_d)P_n(j)} \approx
 \frac{\sum_{k=2}^{2^n}P_n(k)\log k - r\beta\sum_{k=2^d}^{2^n}kP_n(k)\log k}{\sum_{j=1}^{2^n}P_n(j)-r\beta\sum_{j=2^d}^{2^n}jP_n(j)}.
\ee
Let $a$ be as in (\ref{a}) and denote by  $b=(\mu-2)/\sigma$ and 
\[
s = \Phi(a)N/2 + \phi(a) \sqrt{N}/2
\]
then the numerator tends to
\(
\log(N/2) (\Phi(b) - rs\beta) + \phi(b)/\sqrt{N}
\)
and the denominator tends to $1-(1/2)^N - rs\beta$.
Then (\ref{vc==d}) tends to
\[
\log (N/2) \frac{\Phi(b)-rs\beta}{1-rs\beta} + \frac{\phi(b)}
{\sqrt{N}\left(1-(1/2)^N - rs\beta  \right)} \approx \log(N/2) + \frac{\phi(b)}{\sqrt{N}(1-rs\beta)}.
\]
Substituting for $r$, $\beta$ and $N$ yields  the statement of the theorem.\bqed

\subsubsection{Proof of Theorem \ref{cor2}}

The probability that a random class of cardinality $k$ satisfies the property $\mV_d(S_m)$ is
\begin{eqnarray}
\lefteqn{P_n(\mF\models\mV_d(S_m)\;|\; |\mF|=k) = P_n(\mF\models\mV_d, \mF_{|\xi}=\zeta \;|\; |\mF|=k)\nonumber} \quad\quad\\
\label{fgf}
&=& P_n(\mF\models\mV_d \;|\; \mF_{|\xi}=\zeta, |\mF|=k) P_n(\mF_{|\xi}=\zeta\;|\; |\mF|=k).
\end{eqnarray}
The  factor on the right of (\ref{fgf}) is the probability of the condition $E_\zeta$ that a random class $\mF$ of size $k$
 has for all its elements the same restriction  $\zeta$
on the sample $\xi$. As in the proof of Lemma \ref{lem2} it suffices
to use the probability distribution $Q^*_{n,k}$
in which case $Q^*_{n,k}(E_\zeta)$ is $\gamma^k$ where $\gamma\equiv(1/2)^{m}$.
The left factor of (\ref{fgf}) is the probability
that a random class $\mF$ with cardinality $k$ which satisfies $E_\zeta$
will satisfy property $\mV_d$.
This is the same as the event that a random class $\mF$ on $[n]\setminus S_m$
satisfies property $\mV_d$. Its probability is $P^*_{n-m,k}(\mV_d)$ 
which equals $1$ for $k<2^d$ and
using (\ref{bet}) for $k\geq 2^d$ it  is approximated as
\[
1 - P^*_{n-m,k}({\mV}^c_d)    \approx 1 - rk\beta
\]
where $r =(n-m)/(d2^d)$ and $\beta = 2^{-d2^d}$. 
Hence the conditional entropy becomes
\be
\label{vc=d}
 \sum_{k=2}^{2^n}\frac{\gamma^kP^*_{n-m,k}({\mV}_d)P_n(k)\log k}
 {\sum_{j=1}^{2^n}\gamma^jP^*_{n-m,j}({\mV}_d)P_n(j)} 
 \approx
 \frac{\sum_{k=2}^{2^n}\gamma^{k}P_n(k)\log k - r\beta\sum_{k=2^d}^{2^n}\gamma^{k}kP_n(k)\log k}
 {\sum_{j=1}^{2^n}\gamma^{j}P_n(j)-r\beta\sum_{j=2^d}^{2^n}\gamma^{j}jP_n(j)}.
\ee
Let $p=\gamma/(1+\gamma)$, $N=2^n$ and denote by $P_{N,p}(k)$ the binomial distribution with parameters $N$ and $p$.
Then (\ref{vc=d}) becomes
 \be
 \label{vc--d}
 \frac{\sum_{k=2}^{N}P_{N,p}(k)\log k - r\beta\sum_{k=2^d}^{N}kP_{N,p}(k)\log k}
 {\sum_{j=1}^{N}P_{N,p}(j)-r\beta\sum_{j=2^d}^{N}jP_{N,p}(j)}.
 \ee
With $\mu=Np$ and $\sigma=\sqrt{Np(1-p)}$
let $a=(\mu-2^d)/\sigma$,  $b=(\mu-2)/\sigma$ and
\[
s = \Phi(a)Np + \phi(a) \sqrt{Np(1-p)}
\]
then the numerator tends to
\[
\log(Np) (\Phi(b) - rs\beta) + \frac{\phi(b)}{\sqrt{N}}\sqrt{\frac{1-p}{p}}
\]
and the denominator tends to $1-(2^m/(2^m+1))^N - rs\beta$.
%
 Therefore (\ref{vc--d}) tends to
\[
\log (Np) + \frac{\phi(b)}
{\sqrt{N}\left(1-o(1)- rs\beta  \right)}\sqrt{\frac{1-p}{p}}.
 \]
Substituting for $r$, $s$, $\beta$ and $N$ yields  the
first  statement of the theorem.

Next, we obtain the description complexity.
We have
\[
\ell(x_{\mV_d(S_m)}) = -\log P_n(\mV_d(S_m)).
\]
The probability $P_n(\mV_d(S_m))$ is the denominator
of  (\ref{vc=d}) which equals the denominator of (\ref{vc--d}) multiplied by a factor of $(2(1-p))^{-N}$  hence from above
\[
-\log P_n(\mV_d(S_m)) \approx -\log\left(1- \left(\frac{2^m}{1+2^m}\right)^N - r\beta s\right) + N 
+ N\log \left(1-p\right).
\]
Let $q=\left(\frac{2^m}{1+2^m}\right)^N + r\beta s$ then we have as an estimate
\[
\ell(x_{\mV_d(S_m)}) \approx \log \left(\frac{1}{1-q}\right) +N(1+\log(1-p))
\]
from which the second statement of the theorem follows. \bqed

\section{Acknowledgements}

The author thanks Dr. Vitaly Maiorov from the department of mathematics of the Technion
for useful remarks.

\bibliographystyle{plain}  

\begin{thebibliography}{10}

\bibitem{AnsteeFlemingFurediSali05}
R.~Anstee, B.~Fleming, Z.~Furedi, and A.~Sali.
\newblock Color critical hypergraphs and forbidden configurations.
\newblock In {\em Proc. EuroComb'2005}, pages 117--122. DMTCS, 2005.

\bibitem{AB99}
M.~Anthony and P.~L. Bartlett.
\newblock {\em Neural Network Learning:Theoretical Foundations}.
\newblock Cambridge University Press, 1999.

\bibitem{AnthonyBrightwellCooper95}
M.~Anthony, G.~Brightwell, and C.~Cooper.
\newblock The {V}apnik-{C}hervonenkis dimension of a random graph.
\newblock {\em Discrete Mathematics}, 138(1-3):43--56, 1995.

\bibitem{145265}
A.~Blum.
\newblock Learning boolean functions in an infinite attribute space.
\newblock {\em Mach. Learn.}, 9(4):373--386, 1992.

\bibitem{Bollobas86}
B.~Bollob\'as.
\newblock {\em Combinatorics: Set Systems, Hypergraphs, Families of vectors,
  and combinatorial probability}.
\newblock Cambridge University Press, 1986.

\bibitem{BoucheronBousquetLugosi2004a}
S.~Boucheron, O.~Bousquet, and G.~Lugosi.
\newblock {\em Introduction to Statistical Learning Theory, In , O. Bousquet,
  U.v. Luxburg, and G. Rï¿½sch ({Editors})}, pages 169--207.
\newblock Springer, 2004.

\bibitem{CoverKing1978}
T.~Cover and R.~King.
\newblock A convergent gambling estimate of the entropy of english.
\newblock {\em IEEE Transactions on Information Theory}, 24(4):413--421, 1978.

\bibitem{hammerYves}
Y.~Crama and P.~Hammer.
\newblock {\em Boolean Functions Theory, Algorithms and Applications}.
\newblock Cambridge University Press, to appear.
\newblock \url{http://www.rogp.hec.ulg.ac.be/Crama/Publications/BookPage.html}.

\bibitem{Feller1}
W.~Feller.
\newblock {\em An Introduction to Probability Theory and Its Applications},
  volume~1.
\newblock Wiley, New York, second edition, 1957.

\bibitem{hw-esrq-87}
D.~Haussler and E.~Welzl.
\newblock Epsilon-nets and simplex range queries.
\newblock {\em Discrete Computational Geometry}, 2:127--151, 1987.

\bibitem{Jansonetal00}
S.~Janson, T.~{\L}uczak, and A.~Ruci{\'n}ski.
\newblock {\em Random Graphs}.
\newblock Wiley, New York, 2000.

\bibitem{Kolmogorov63}
A.~N. Kolmogorov.
\newblock On tables of random numbers.
\newblock {\em Sankhyaa, The Indian J. Stat.}, A(25):369--376, 1963.

\bibitem{Kolmogorov65}
A.~N. Kolmogorov.
\newblock Three approaches to the quantitative definition of information.
\newblock {\em Problems of Information Transmission}, 1:1--17, 1965.

\bibitem{Kontoyiannis97}
I.~Kontoyiannis.
\newblock The complexity and entropy of literary styles.
\newblock Technical Report~97, NSF, October 1997.

\bibitem{LiVitanyi97}
M.~Li and P.~Vitanyi.
\newblock {\em An introduction to {K}olmogorov complexity and its
  applications}.
\newblock Springer-Verlag, New York, 1997.

\bibitem{DBLP:journals/dam/MaiorovR98}
V.~Maiorov and J.~Ratsaby.
\newblock The degree of approximation of sets in {E}uclidean space using sets
  with bounded {V}apnik-{C}hervonenkis dimension.
\newblock {\em Discrete Applied Mathematics}, 86(1):81--93, 1998.

\bibitem{Mansour94}
Y.~Mansour.
\newblock {\em Learning Boolean Functions via the Fourier Transform}.

\bibitem{Mitchell1997}
T.~Mitchell.
\newblock {\em Machine Learning}.
\newblock McGraw Hill, 1997.

\bibitem{28427}
B.~K. Natarajan.
\newblock On learning boolean functions.
\newblock In {\em STOC '87: Proceedings of the nineteenth annual ACM conference
  on Theory of computing}, pages 296--304, New York, NY, USA, 1987. ACM Press.

\bibitem{PachAgrawal1995}
J.~Pach and P.~K. Agarwal.
\newblock {\em Combinatorial Geometry}.
\newblock Wiley-Interscience Series, 1995.

\bibitem{Pinkus85}
A.~Pinkus.
\newblock {\em n-widths in Approximation Theory}.
\newblock Springer-Verlag, 1985.

\bibitem{p-csp-84}
D.~Pollard.
\newblock {\em Convergence of Stochastic Processes}.
\newblock Springer-Verlag, 1984.

\bibitem{Ratsaby2006a}
J.~Ratsaby.
\newblock On the combinatorial representation of information.
\newblock In Danny~Z. Chen and D.~T. Lee, editors, {\em The Twelfth Annual
  International Computing and Combinatorics Conference (COCOON'06)}, volume
  LNCS 4112, pages 479--488. Springer-Verlag, 2006.

\bibitem{RatsabyMaiorov98}
J.~Ratsaby and V.~Maiorov.
\newblock On the value of partial information for learning by examples,.
\newblock {\em Journal of Complexity}, 13:509--544, 1998.

\bibitem{SzJack2003}
J.~W. Szostak.
\newblock Functional information: Molecular messages.
\newblock {\em Nature}, 423:423--689, 2003.

\bibitem{Vapnik1998}
V.~N. Vapnik.
\newblock {\em Statistical Learning Theory}.
\newblock Wiley, 1998.

\bibitem{VapnikChervonenkis71}
V.~N. Vapnik and A.~Ya. Chervonenkis.
\newblock On the uniform convergence of relative frequencies of events to their
  probabilities.
\newblock {\em Theory Probab. Apl.}, 16:264--280, 1971.

\bibitem{YcartRatsaby06_1}
B.~Ycart and J.~Ratsaby.
\newblock {VC} and related dimensions of random function classes.
\newblock {\em Disc. Math. and Theor. Comp. Sci.}, 2008.
\newblock To appear.

\end{thebibliography}

\appendix\section*{Figures}

\begin{figure}[hbt]
\begin{center}
$\begin{array}{c@{\hspace{.1in}}c}
\multicolumn{1}{l}{} &	\multicolumn{1}{l}{} \\ 
\epsfxsize=2.4in    \includegraphics[clip=true,scale =1,angle=90]{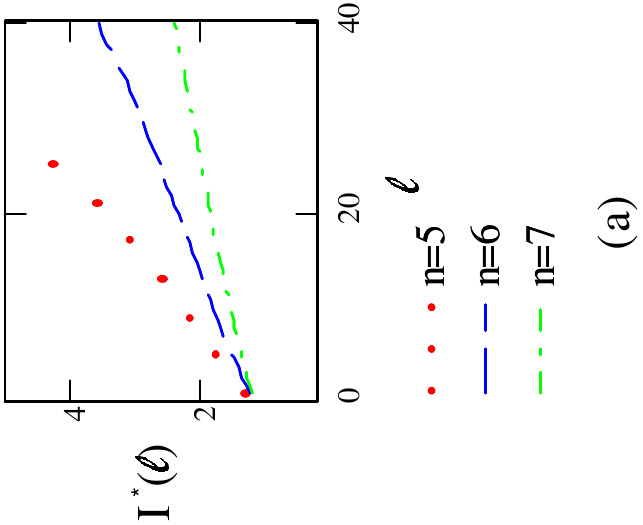} 
& 
{\includegraphics[clip=true,scale = 1,angle=90]{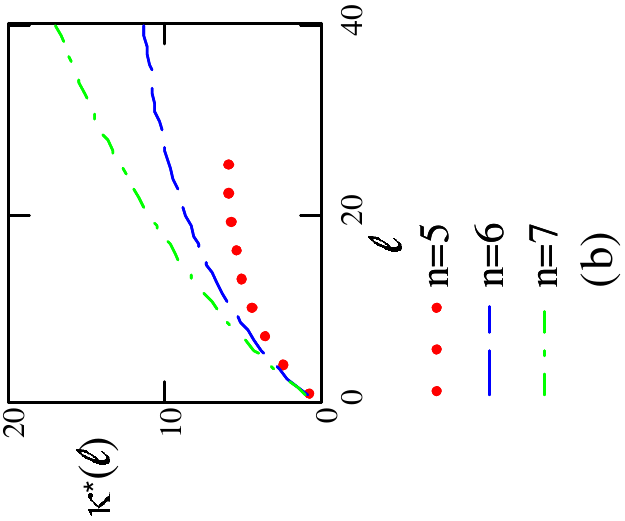} }
\end{array}$
 \end{center}
\caption{ (a) $I^*(\ell)$, ~~ (b) $\cost^*(\ell)$
 }
\label{Istara}
\end{figure}

\begin{figure}[hbt]
\begin{center}

\epsfxsize=3in
\includegraphics[clip=true,scale = 1]{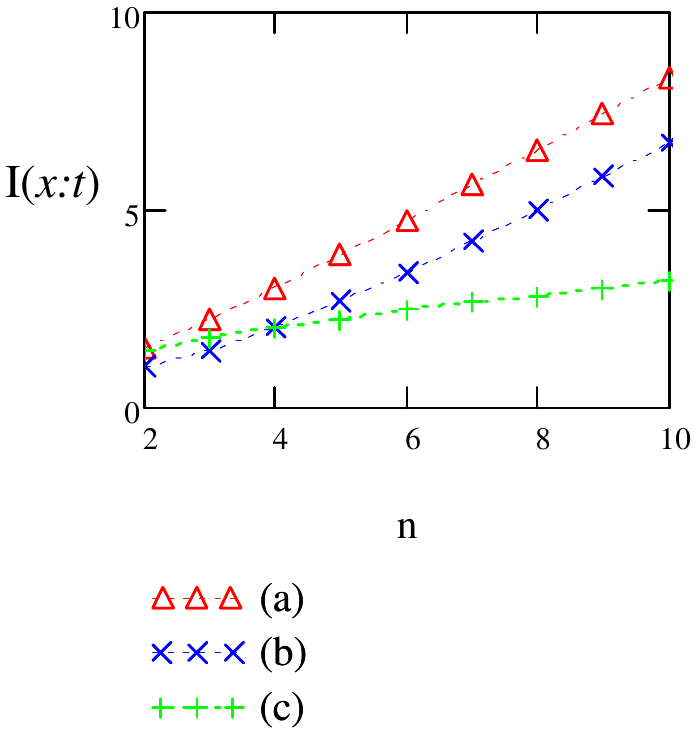}
\end{center}
\caption{Information $I(x_{\mM(G)}:F)$ for  (a) $|G|=\sqrt{n}$,(b) $|G|=n$  and (c)  $|G|=2^{n-\sqrt{n}}$  }
\label{fig2a}
\end{figure}


\begin{figure}[hbt]
\begin{center}

\epsfxsize=3in
\includegraphics[clip=true,scale = 1]{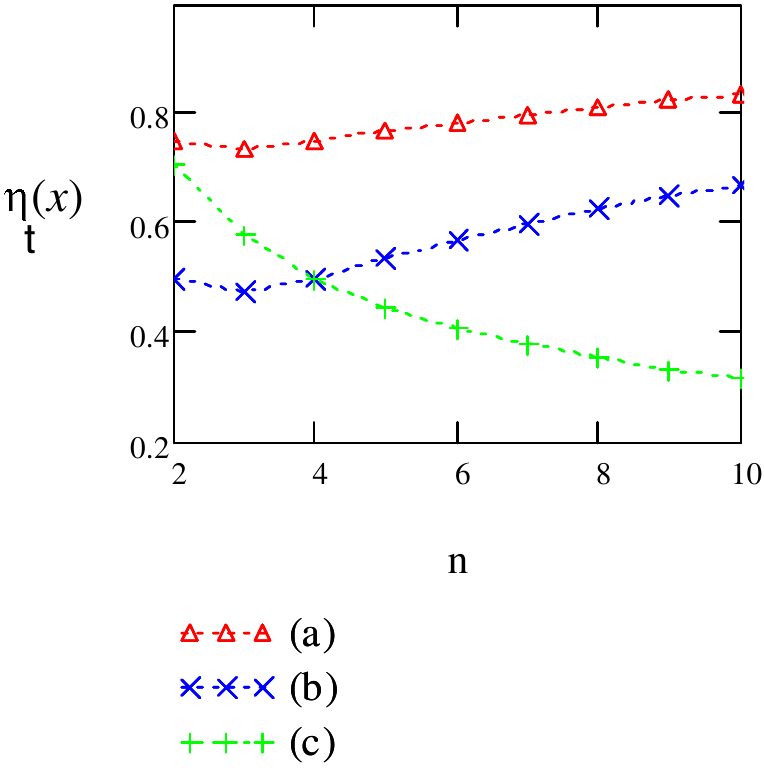}
\end{center}
\caption{Efficiency $\eta_F(x_{\mM(G)})$ for  (a) $|G|=\sqrt{n}$,(b) $|G|=n$  and  (c) $|G|=2^{n-\sqrt{n}}$  }
\label{fig2b}
\end{figure}


\begin{figure}[hbt]
\begin{center}

\epsfxsize=3in
\includegraphics[clip=true,scale = 1]{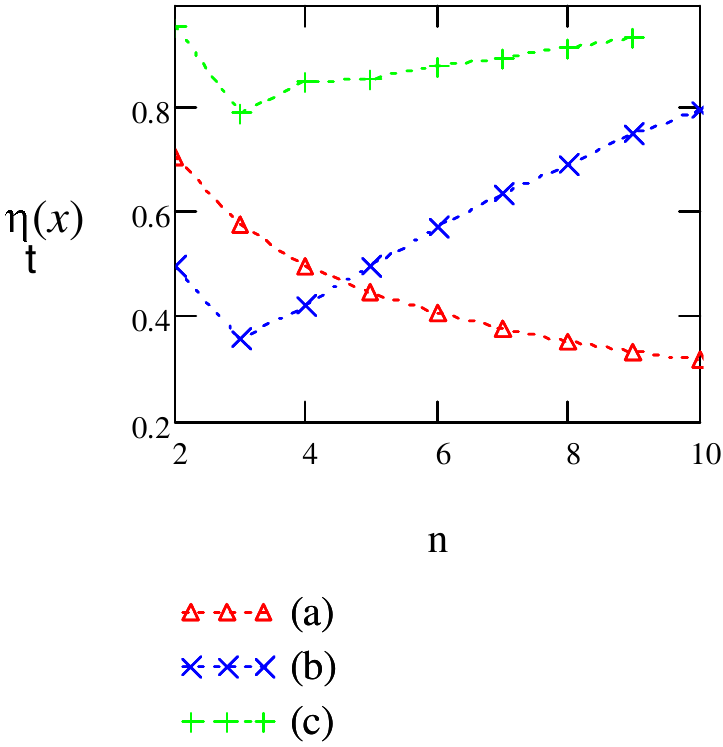}
\end{center}
\caption{Efficiency $\eta_F(x)$ for (a) 
$x_{\mL_d}$, (b) $x_{{\mV}^c_d}$ and (c) $x_{{\mV}_d }$,  $d=\sqrt{n}$ }
\label{fig3a}
\end{figure}

\begin{figure}[hbt]
\begin{center}

\epsfxsize=3in
\includegraphics[clip=true,scale =1]{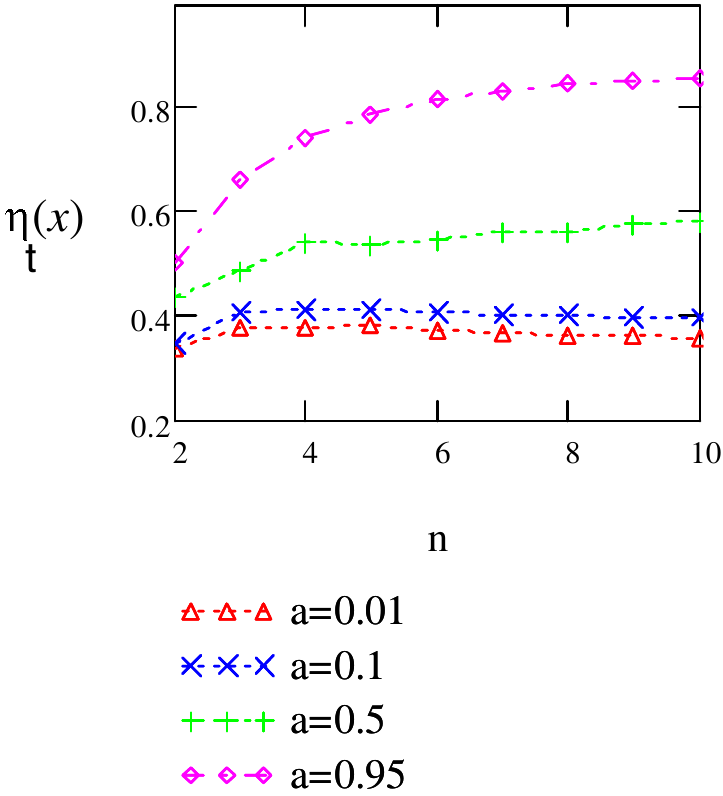} \\ [0.4cm]
\end{center}
\caption{Efficiency $\eta_F(x_{{\mV}_d(S_m)})$,  with $m = n^a$, $a=0.01, 0.1, 0.5, 0.95$, $d=\sqrt{n}$}
\label{fig3b}
\end{figure}

\begin{figure}[hbt]
\begin{center}

\epsfxsize=3in
\includegraphics[clip=true,scale = 1]{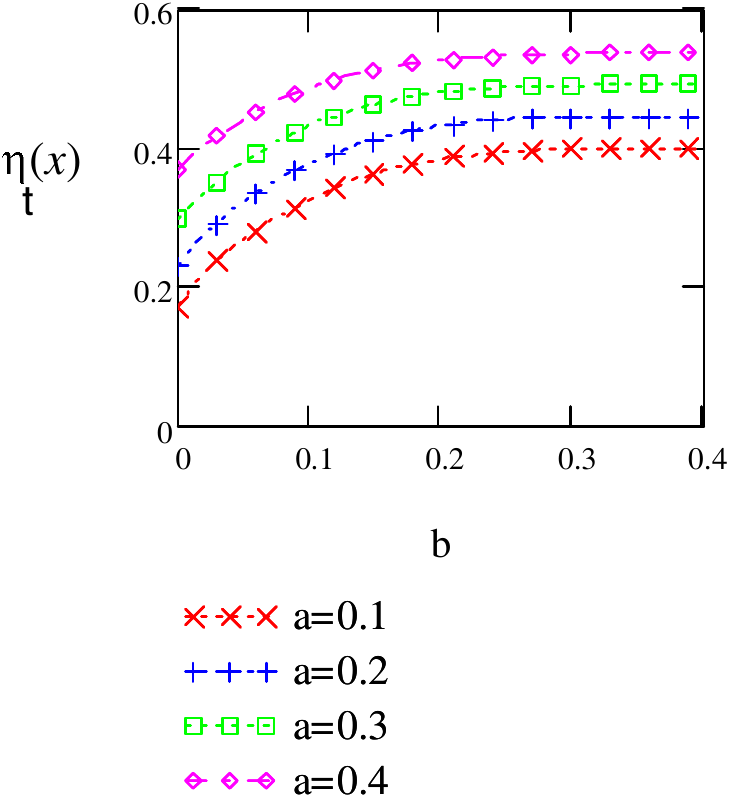}
\end{center}
\caption{Efficiency $\eta_F(x_{{\mV}_d(S_m)})$,  with  $n=10$, $m(n)=m^a$, $d(n)=n^b$}
\label{Fig4}
\end{figure}

\end{document}